\newif\ifnotes
\newif\ifsubmission
\newif\ifextendedabstract
\else\usepackage[pdftex,left=1in,top=1in,bottom=1in,right=1in]{geometry}\fi
\else \usepackage{amsthm} \fi
\mathchardef\mhyphen="2D
\definecolor{comment}{RGB}{176, 224, 128}
\newcommand{\F}{\mathds{F}}
\newcommand{\cX}{\mathcal{X}}
\newcommand{\cY}{\mathcal{Y}}
\newcommand{\sk}{\mathsf{sk}}
\newcommand{\samp}{\leftarrow}
\newcommand{\reg}{\mathsf{R}}
\newcommand{\regi}[1]{\reg_{\mathsf{#1}}}
    \newtheorem{theorem}{Theorem}
    \newtheorem{claim}{Claim}
    \crefname{claim}{Claim}{Claims}
    \newtheorem{lemma}{Lemma}
    \newtheorem{corollary}{Corollary}
    \newtheorem{definition}{Definition}
    \newtheorem{remark}{Remark}
\newtheorem{construction}{Construction}
\newcommand{\poly}{\mathsf{poly}}
\newcommand{\zo}{\{0, 1\}}
\newcommand{\negl}{\mathsf{negl}}
\newcommand{\Sim}{\mathcal{S}}
\newcommand{\Adv}{\adv}
\newcommand{\Sign}{\mathsf{Sign}}
\let\originalleft\left
\let\originalright\right
\renewcommand{\left}{\mathopen{}\mathclose\bgroup\originalleft}
\renewcommand{\right}{\aftergroup\egroup\originalright}
\newcommand{\cert}{\mathsf{cert}}
\newcommand\restr[2]{{
  \left.\kern-\nulldelimiterspace 
  #1 
  \vphantom{\big|} 
  \right|_{#2} 
  }}
\newcommand{\authnote}[3]{\textcolor{#3}{[{\footnotesize {\bf #1:} { {#2}}}]}}
\newcommand{\alper}[1]{\ifnotes \authnote{Alper}{#1}{blue} \fi}
\newcommand{\justin}[1]{\ifnotes \authnote{Justin}{#1}{ForestGreen} \fi}
\newcommand{\PQSig}{\mathsf{PQSig}}
\newcommand{\Gen}{\mathsf{Gen}}
\newcommand{\Prove}{\mathsf{Prove}}
\newcommand{\Verify}{\mathsf{Verify}}
\newcommand{\Del}{\mathsf{Del}}
\newcommand{\DelVer}{\mathsf{DelVer}}
\newcommand{\FS}{\mathsf{FS}}
\newcommand{\ch}{\mathsf{ch}}
\newcommand{\verkey}{\mathsf{vk}}
\newcommand{\delkey}{\mathsf{dk}}
\newcommand{\secpar}{\lambda}
\newcommand{\Accept}{\mathsf{Accept}}
\newcommand{\lang}{\mathcal{L}}
\newcommand{\Hyb}{\mathsf{Hyb}}
\newcommand{\adv}{\mathsf{Adv}}
\newcommand{\aux}{\mathsf{aux}}
\newcommand{\sigcd}{\mathsf{Sig\text{-}CDen}}
\newcommand{\nizkcd}{\mathsf{NIZK\text{-}CDen}}
\newcommand{\fscd}{\mathsf{FS\text{-}CDen}}
\newcommand{\sigcdenexp}{\mathsf{Sig\text{-}CDen\text{-}Exp}}
\newcommand{\nizkcdenexp}{\mathsf{NIZK\text{-}CDen\text{-}Exp}}
\newcommand{\concat}{\|}
\newcommand{\bbN}{\mathbb{N}}
\newcommand{\tracedist}{\mathsf{TD}}
\newcommand{\Func}{\mathsf{Func}}
\title{How to Delete Without a Trace:\\Certified Deniability in a Quantum World}
 \author{Alper \c{C}akan\thanks{Carnegie Mellon University. \texttt{acakan@cs.cmu.edu}.} \and Vipul Goyal\thanks{NTT Research \& Carnegie Mellon University.  \texttt{vipul@vipulgoyal.org}} \and Justin Raizes\thanks{Carnegie Mellon University \texttt{jraizes@andrew.cmu.edu}.}}
\date{}
\begin{document}
\maketitle
\ifextendedabstract\else
\begin{abstract}
    Is it possible to comprehensively destroy a piece of quantum information, so that \emph{nothing} is left behind except the memory of whether one had it at one point? For example, various works, most recently Morimae, Poremba, and Yamakawa (TQC '24), show how to construct a signature scheme with certified deletion where a user who deletes a signature on $m$ cannot later produce a signature for $m$. However, in all of the existing schemes, even after deletion the user is still able keep irrefutable evidence that $m$ was signed, and thus they do not fully capture the spirit of \emph{deletion}.

    In this work, we initiate the study of \emph{certified deniability} in order to obtain a more comprehensive notion of deletion. Certified deniability uses a simulation-based security definition, ensuring that any information the user has kept after deletion could have been learned without being given the deleteable object to begin with; meaning that deletion leaves no trace behind! We define and construct two non-interactive primitives that satisfy certified deniability in the quantum random oracle model: signatures and non-interactive zero-knowledge arguments (NIZKs). As a consequence, for example, it is not possible to delete a signature/NIZK and later provide convincing evidence that it used to exist. Notably, our results utilize uniquely quantum phenomena to bypass Pass's (CRYPTO '03) celebrated result showing that deniable NIZKs are impossible even in the random oracle model.
\end{abstract}
\fi

\ifextendedabstract
    A simple yet fundamental result of quantum mechanics, \emph{the no-cloning theorem}, says that arbitrary unknown quantum states cannot be cloned. This simple principle has found a lot of applications in quantum cryptography. In their seminal work, Broadbent and Islam~\cite{TCC:BroIsl20} observed that, using quantum information, it is possible to enforce a user to delete secret information, introducing the notion of \emph{encryption with certified deletion}. The notion of certified deletion (also called \emph{revocable cryptography}) was subsequently generalized to a wide variety of cryptographic primitives, e.g.~\cite{C:HMNY22,EC:HKMNPY24,ITCS:Poremba23,TCC:AnaPorVai23,EC:AKNYY23,C:BarKhu23,EC:BGKMRR24,C:BR24}.

In certified deletion, we encode a piece of information (such as a ciphertext, a secret decryption key, or a piece of software) as a quantum state and present it to a user for them to use it for a period of time. Then, after some time, we ask them to \emph{delete} this information by producing a (potentially quantum) certificate. If the certificate is valid, then the user is not able to \emph{use} the information any further; the precise notion of ``use'' is defined on an ad-hoc basis for each application. 

It is trivial to see that this notion is impossible classically, since classical information can be copied perfectly. Following the line of work on certified deletion, recently Morimae, Poremba, and Yamakawa~\cite{TQC:MPY24} (TQC'24) 
\alper{removed cite. I think its good to emphasize the TQC followup aspect}\justin{I disagree. This doesn't match the rest of the citations and results in it not appearing in the bibliography. I put the cite back.} 
introduced signatures with certified deletion\footnote{More precisely, they call this notion \emph{revocable signatures}.}. In this setting, they require that once the user deletes a signature for a message $m$, he can no longer produce a valid signature for $m$. 

\justin{My preferred version below. The other is commented out.}
Although this notion is well-motivated, we argue that it does not capture the full meaning of deletion. The definition of Morimae et al. requires that after the user produces a valid deletion certificate, they cannot produce a signature that is recognized as valid \emph{by the honest verification algorithm} $\mathsf{SignatureScheme}.\mathsf{Verify}$. However, they may be able to use their leftover information to prove to a third party that they \emph{once} had a signature for the message $m$. Indeed, the construction in \cite{TQC:MPY24} suffers from such an attack:\footnote{We note that despite this weakness, their construction still has applications in settings where the third-party verifier can be trusted to be honest, such as an access card verifier for building access.} In their scheme, a signature for $m$ consits of a quantum state $\ket{T}$, along with an associated \emph{serial number} $\verkey_T$ and a fully classical signature $\mathsf{sig}$ on $\verkey_T || m$. Once we delete, while the honest verifier rejects attemps without $\ket{T}$, the adversary still has irrefutable evidence, $(\verkey_T || m, \mathsf{sig})$, that $m$ was once signed!


This shortcoming in their definition comes from a custom tailoring of the certified deletion paradigm to the specific security guarantees of the target cryptographic primitive - a trait shared by all prior works.
In light of this, we propose a more comprehensive philosophy for certified deletion:
\begin{quote}
    \begin{center}
        \emph{Once a user deletes the delegated information, it should be as if they never received it in the first place.}
    \end{center}
\end{quote}
We call our notion, informally \emph{deleting without a trace}\footnote{Here the word \emph{trace} is in the everyday usage sense, not in the sense of the trace operation}, and more formally, \emph{certified deniability} (due to similarities to the historical motivations for a notion called \emph{deniability} in classical cryptography).

\paragraph{What Objects Can Be Deleted Without a Trace?} In this work, we study how to delete two fundamental cryptographic objects: signatures and non-interactive zero-knowledge proofs (NIZKs).
Before describing our definitions, we first briefly recall NIZK proofs for unfamiliar readers. 
A NIZK proof enables a prover to send a single message to convince a verifier of the truth of some NP statement, without leaking anything (such as the witness) except for the fact that the statement is true.
NIZKs are one of most fundamental primitives in cryptography, with a plethora of applications such as secure multiparty computation\alper{examples and citations}. 

In a deleteable NIZK scheme, we will require that once the adversary deletes a proof, they should not be able to convince other parties that the statement is true, unless they already knew how to prove it themselves.
A deleteable NIZK can be applied in any scenario where we would use a standard NIZK, but would later like the verifier to
delete the proof so that they cannot convince others.

\paragraph{Formalizing Deleting without a Trace} 
To formalize our notion of certified deniability (i.e deleting without a trace), we follow the simulation paradigm to capture the comprehensive deletion philosophy mentioned before. We require that the leftover state of any adversary, after producing a valid deletion certificate, can be simulated \emph{without} having received the deleteable object at all! This naturally captures the ideas that anything an adversary could learn from a deleted object, they could learn without it; and that deleting leaves no trace at all.

\paragraph{Application: Deniable Documents.} 
Aside from capturing the notion of deletion in a natural way for a variety of primitives, our notion has immediate natural applications even in specific settings, such as signatures. For example, consider two parties Alice and Bob who sign a contract. Their signatures should be publicly verifiable so that if either party does not uphold their end, the contract can be publicly verified in court. However, after the contract has ended, neither party should be able to provide evidence and convince other people that, e.g. Alice once paid for a particular product or service from Bob. To ensure this, at the termination of the contract, the parties can both delete the other's signature. 

Another similar example is employment. Consider a disgruntled former employee of, say, an intelligence agency, who has been fired or left the agency. They should not be able to prove to foreign countries that they used to work there, or prove that the agency had signed various top-secret documents for internal use. Once the signature on the document is deleted, there is nothing linking the document to the agency; the former employee could have generated a fake document, therefore, no party will believe him.

\justin{I'd prefer not to say intelligence agency/country (though deniable documents is a good example).}

\paragraph{Positive Results.}
We construct both a deleteable signature scheme and a deleteable NIZK proof system that satisfies certified deniability (deleting without a trace). We achieve these by augmenting the Fiat-Shamir transform, which gives a general method of constructing signatures and NIZKs, with quantum techniques to add certified deniability in the quantum random oracle model (QROM).
\begin{theorem}[Informal]
    There exist NIZKs with certified deniability in the quantum random oracle model (QROM). Furthermore, if one-way functions exist, then there exist signatures with certified deniability in the QROM.
\end{theorem}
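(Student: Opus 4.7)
The plan is to construct both primitives via a common paradigm: a quantum augmentation of the Fiat--Shamir transform. For signatures from one-way functions, I would begin from a canonical $\Sigma$-protocol for knowledge of a preimage of the OWF (for instance, via Blum's OR-based protocol or a Schnorr-style identification scheme), chosen to have HVZK and a super-polynomially large challenge space. A signature on $m$ would be a (roughly) uniform coherent superposition over valid transcripts, namely $\ket{\sig_m} \propto \sum_{(\com, \rsp)} \ket{\com}\ket{\rsp}$, where the sum ranges over $(\com, \rsp)$ for which $(\com, H(\com, m), \rsp)$ is an accepting $\Sigma$-transcript; this is prepared coherently by the signer using the witness. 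Verification is a computational-basis measurement followed by the classical $\Sigma$-check; deletion is a conjugate-basis (e.g., Hadamard) measurement whose outcome is a classical certificate verifiable against $\pk$ and $m$. The NIZK is constructed identically, starting from a $\Sigma$-protocol for the target NP language.

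To establish certified deniability I would construct a QROM simulator that, given only $(\pk, m)$ (respectively, only the statement), outputs a quantum signature/proof without a witness. The simulator uses the HVZK simulator of the underlying $\Sigma$-protocol to sample many fake transcripts $(\com_i, \ch_i, \rsp_i)$, adaptively reprograms the QRO so that $H(\com_i, m) = \ch_i$ on this set, and hands the adversary a superposition over these simulated transcripts. When the adversary outputs a deletion certificate that passes cert-verification, the simulator outputs the adversary's leftover state as the ideal output. Indistinguishability of the real and ideal experiments would proceed via hybrids: a quantum adaptive reprogramming argument (in the spirit of compressed-oracle techniques) to show that the reprogrammed QRO is undetectable since the $\com_i$'s have high min-entropy and are not queried with non-negligible amplitude before signature generation, followed by an HVZK-based argument showing that the simulated superposition is computationally close to the honest one.

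The heart of the proof, and the step that bypasses Pass's classical impossibility, is showing that a valid deletion certificate forces the adversary to have relinquished any classical ``evidence'' transcript that they could otherwise have measured out of the signature prior to deletion. I would formalize this via a monogamy-of-entanglement style statement tailored to the transcript encoding: the same quantum signature state cannot simultaneously yield a Hadamard-basis outcome that passes cert-verification and retain enough computational-basis information to output an accepting $(\com, \rsp)$. This incompatibility is purely quantum and has no ROM analogue, which is precisely why Pass's impossibility does not rule out our construction. Conditioned on a valid certificate, the adversary's remaining workspace contains no binding evidence that $m$ was ever signed, and by the reprogramming and HVZK arguments above, it is simulatable in the QROM.

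The main obstacle will be making the monogamy argument tight against coherent QPT adversaries that delay measurement and exploit purified side information entangled with the signature register. A naive analysis that treats the certificate and a potential classical transcript in isolation breaks down once the adversary maintains purified auxiliary registers; the proof therefore needs a joint analysis of the (reprogrammed) QROM register, the adversary's purified workspace, and the certificate register. Integrating this with QROM reprogramming at adversary-chosen commitment strings, and ensuring that the HVZK-based quantum simulation matches the real distribution on accepting transcripts with negligible error, is where the bulk of the technical work will lie.
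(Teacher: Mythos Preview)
Your high-level paradigm---Fiat--Shamir in superposition with a conjugate-basis deletion certificate---matches the paper's, but the proposal is missing the structural ingredient that makes the deniability argument go through, and as written the monogamy step you flag as ``the main obstacle'' does not have a viable route to completion.

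The gap is the following. In certified deniability the simulator cannot program the global oracle; the distinguisher receives the \emph{true} $H$. Hence the simulator's local reprogramming at the points $(\com_i,m)$ is undetectable only if the adversary's post-deletion state contains \emph{none} of the $\com_i$. Your reprogramming argument (``the $\com_i$'s have high min-entropy and are not queried with non-negligible amplitude before signature generation'') is the standard Fiat--Shamir-in-QROM soundness argument and addresses the wrong threat: here the adversary is \emph{handed} a state containing every $\com_i$ in superposition and may try to retain one before producing a certificate. A Hadamard-basis measurement of a generic superposition $\sum_{(\com,\rsp)}\ket{\com,\rsp}$ does not yield a certificate that is both efficiently verifiable by the signer and witnesses that \emph{all} computational-basis information has been erased; indeed the paper explicitly explains why unstructured two-point (and many-point) superpositions give only $1/\poly(\lambda)$ forgetting---the adversary guesses one branch, keeps its $\com$, and still passes on the rest. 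Separately, your simulator produces a superposition over polynomially many HVZK transcripts while the honest state ranges over exponentially many, so the two are not even close as states before any reprogramming argument begins.

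The paper's fix is to index the superposition by a hidden \emph{subspace} $A\subset\F_2^\lambda$: the proof is $\sum_{a\in A}(-1)^{s\cdot a}\ket{a}\otimes\ket{\FS^{H(a\concat\cdot)}(x,w)}$, the certificate is the full quantum state returned (not a Hadamard outcome), and certificate verification uncomputes the transcript and projects onto the coset state $\ket{A_{0,s}}$. The simulator reprograms $H$ only at inputs prefixed by some $a\in A\setminus\{0\}$. Now ``valid certificate $\Rightarrow$ the adversary has forgotten every reprogrammed point'' is exactly the direct-product hardness of subspace states (given $\ket{A}$ and membership oracles one cannot output both a nonzero vector in $A$ and one in $A^\perp$), which gives negligible error without an ad hoc monogamy analysis. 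A second missing piece is complexity leveraging: because there are $2^{\lambda/2}$ branches, the paper assumes a $2^{-\lambda}$-secure sigma protocol and switches the honest transcript to the HVZK-simulated one for each $a\in A$ individually, rather than trying to match an exponential superposition with a polynomial-size simulator.
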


Apart from satisfying our strong deletion security, this result also bypasses Pass's \cite{C:Pass03} celebrated result showing the (classical) impossibility of deniable NIZKs in the random oracle model. This can be seen as a unique application of quantum mechanics to force a user to ``forget'' information in a quantifiable way. For more discussion, see the technical overview in the manuscript.

\paragraph{Evidence Against Plain Model Constructions.} 
Since our results are in the QROM, a natural next question is whether we can hope to achieve certified deniability (of either variety) in the \emph{plain} model. Unfortunately, it appears that there are significant barriers to doing so. We show that if the security proof treats the adversary as a black-box, then it cannot hope to show either of the above notions.

\begin{theorem}[Informal]
   There is no signature/NIZK with certified deniability in the plain model with a security proof that makes black-box use of the adversary.
\end{theorem}

Thus, any valid proof of security for a plain-model construction must be non-black-box. In fact, we give an explicit attack on any plain-model construction by using a strong notion of obfuscation that can only be accessed as if it were an oracle. Although this exact notion of obfuscation does not exist~\cite{barak2001possibility}, it is not clear if the attack can be adapted to weaker notions of obfuscation such as indistinguishability obfuscation, which has seen vast improvements recently~\cite{STOC:JaiLinSah21,EC:JaiLinSah22,C:JLLW23}.
We leave a more definitive answer to the question of certified deniability in the plain model to future work.

\justin{Removed the techniques section since it didn't really fit with the flow and was maybe too detailed while still only talking about a few points.}

\else
    \section{Introduction}

In their seminal work, Broadbent and Islam~\cite{TCC:BroIsl20} observed that it is possible to enforce the deletion of secret information via quantum mechanics, introducing the notion of encryption with certified deletion. The notion of certified deletion was subsequently generalized to a wide variety of cryptographic primitives, e.g.~\cite{C:HMNY22,EC:HKMNPY24,ITCS:Poremba23,TCC:AnaPorVai23,EC:AKNYY23,C:BarKhu23,EC:BGKMRR24,C:BR24}.\footnote{The notion of cryptography with certified deletion has also been referred to as ``revocable cryptography''.}
Following this line of work, Morimae, Poremba, and Yamakawa~\cite{TQC:MPY24} formally defined the notion of revocable signatures\footnote{We note that while it was not formally studied before \cite{TQC:MPY24}, it was folklore that revocable signatures follow from public-key quantum money. \cite{TQC:MPY24} construct revocable signatures from one-way functions.}. A revocable signature scheme enables a sender to sign a message $m$ in such a way that it may later verify that the signature has been destroyed, in the sense that the user who previously received the signature can no longer produce a valid signature for $m$. 

We argue that although the existing notion of revocable signatures is well-motivated, it loses some of the spirit of deletion. Their definition requires that after the user produces a valid certificate, they cannot produce a signature that is recognized as valid \emph{by the honest verification algorithm}. On the other hand, the user may be able to use the leftover information to prove to a third party that the message was signed \emph{at some point}. Indeed, all the existing constructions, including that of \cite{TQC:MPY24}, suffer from such an attack.\footnote{We note that despite this weakness, existing constructions still have applications in settings where the third-party verifier can be trusted to be honest, such as building access cards.}

This crack in the armor of their definition comes from a custom tailoring of the certified deletion paradigm to the specific security guarantees of the target cryptographic primitive - a trait shared by prior works.
In light of this, we propose a more 
comprehensive
philosophy for certified deletion:
\begin{quote}
    \begin{center}
        \emph{Once a user deletes the delegated information, it should be as if they never received it in the first place.}
    \end{center}
\end{quote}

This philosophy has close connections to a natural notion called \emph{deniability} in cryptography,  which has been well-studied in the classical setting~\cite{STOC:DolDwoNao91,C:CDNO97,STOC:DwoNaoSah98,C:Pass03} and has recently received attention in the quantum setting~\cite{STOC:ColGolVaz22}.
Although there are good solutions for interactive deniability, non-interactive deniability is notoriously hard in the classical setting.
Solutions for signatures make heavy compromises such as sacrificing public signature verifiability or requiring well-synchronized global clocks~\cite{EC:JakSakImp96,SP:HulWeb21,AC:AruBonCla22,POPETS:BCGJT23}, while deniable non-interactive zero knowledge is simply impossible even with a random oracle or common reference string~\cite{C:Pass03}.

\paragraph{Certified Deniability for Signatures and NIZKs.} 
In this work, we explore the formalization of our new certified deletion philosophy, which we call \emph{certified deniability}. We define and construct certified deniability for two natural non-interactive primitives: signatures and non-interactive zero knowledge (NIZK).
In a signature or NIZK with certified deniability, any user may publicly verify that a message was signed or that some statement $x$ is true. However, once a signature/NIZK is deleted, the party that had it will retain only their memory of whether they decided to accept or reject it. 
As a consequence, they are unable to convince a skeptical external verifier of the truth of the statement, even if the verifier is willing to accept an alternative proof outside the constraints of the original protocol.

\paragraph{Application: Deniable Contracts.} As a motivating example, consider two parties Alice and Bob who sign a contract. Their signatures should be publicly verifiable so that if either party does not uphold their end, the contract can be publicly verified in court. However, after the contract has ended, neither party should be able to provide evidence that, e.g. Alice once paid for a particular product or service from Bob. To ensure this, at the termination of the contract, the parties can both delete the other's signature.  

\justin{I think contracts are much more motivating than employment. Deniable employment seems very idealized, since you could just ask for details that only an employee would know.}

\subsection{Our Results.}

As the main contributions of this work, we lay the definitional groundwork for signatures and NIZKs with certified deniability and study what assumptions enable them. 

\paragraph{Simulation-Based Deniability.} 
To define certified deniability, we use the simulation paradigm to capture the comprehensive deletion philosophy mentioned before. 
We require that the state of the adversary after producing a valid deletion certificate can be simulated \emph{without} having received a signature (or NIZK) in the first place. 
This gives us a guarantee of the form \emph{``anything an adversary could learn from a deleted signature/NIZK, they could learn without it.''}


To achieve this notion for signatures and NIZKs, we augment the Fiat-Shamir transform to add certified deniability in the quantum random oracle model (QROM).

\begin{theorem}[Informal; \cref{thm:delsigexists}, \cref{thm:delnizkexists}]
    There exist NIZKs with certified deniability in the QROM. Furthermore, if one-way functions exist, then there exist signatures with certified deniability in the QROM.
\end{theorem}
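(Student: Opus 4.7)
The plan is to modify the Fiat-Shamir transform so that the classical response in a proof/signature is encoded quantumly using conjugate-coding, making deletion correspond to measuring in the complementary basis. Concretely, start with a $\Sigma$-protocol for the target NP language (for NIZKs) or for knowledge of a one-way function preimage (for signatures, via the standard OWF $\Rightarrow$ identification scheme $\Rightarrow$ signature pipeline). In plain Fiat-Shamir, the prover sends $(\com,\rsp)$ where the challenge is $\ch = H(x,\com)$. Our modification has the prover sample a uniformly random basis string $\theta$ and send $\com$ together with the BB84-style encoding $\bigotimes_i H^{\theta_i}\ket{\rsp_i}$, while the random oracle is queried on $(x,\com,\theta)$ to derive $\ch$. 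Honest verification measures the quantum register in the basis dictated by $\theta$ and then runs the classical $\Sigma$-verifier; deletion asks the holder to measure the register in the Hadamard-conjugate of $\theta$ and return the outcome $d$, which is accepted iff its bits in positions where $\theta_i=0$ match the corresponding response bits (the issuer retains $\rsp$ to check this).

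Correctness is immediate from standard BB84 analysis. For \emph{soundness}/\emph{unforgeability}, the proof is essentially the Don-Fehr-Majenz-Schaffner style online-extraction analysis of quantum Fiat-Shamir: the classical verifier still extracts a witness from a convincing prover via rewinding/extraction against the QROM, and OWF security plus Fiat-Shamir knowledge soundness yields EUF-CMA for the signature case. The heart of the argument is \emph{certified deniability}, where I would build a simulator $\Sim$ that, without ever seeing a real proof, produces an adversary-view that is computationally indistinguishable from the real post-deletion state. The simulator first runs the adversary on a dummy proof consisting of an honestly sampled $\com$, a random basis string $\theta$, and a uniformly random quantum register in place of the encoded response. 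When the adversary outputs its deletion certificate $d$, $\Sim$ uses adaptive QROM reprogramming: it programs $H(x,\com,\theta)$ to the challenge value that makes $d$ a consistent transcript for some fake response $\rsp'$ it can sample via the honest-verifier zero-knowledge simulator of $\Sigma$; then it uses the zero-knowledge simulator of $\Sigma$ to produce the corresponding classical transcript and returns the adversary's residual state.

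The indistinguishability argument proceeds via a hybrid sequence: (i) switch the quantum register to maximally mixed conditioned on the certificate $d$ being fixed (this is where the quantum forgetting lemma enters), arguing that any consistent adversarial $d$ destroys the adversary's ability to predict the computational-basis contents in positions where $\theta_i=1$, essentially via an entropic uncertainty / monogamy-of-entanglement bound; (ii) reprogram the random oracle at the single point $(x,\com,\theta)$, invoking the QROM adaptive reprogramming lemma (Unruh / Grilo-Hövelmanns-Hülsing-Majenz) to bound the distinguishing advantage by $O(q/\sqrt{|\mathsf{range}|})$ where $q$ is the adversary's query count; (iii) replace the honestly generated classical $(\com,\rsp')$ with the output of the $\Sigma$-protocol's HVZK simulator, which is indistinguishable by the zero-knowledge property of $\Sigma$. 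Composing these hybrids gives a negligible-distance simulation.

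The main obstacle I expect is step (i): proving that any deletion certificate $d$ that passes verification forces the adversary's residual state to contain essentially no information about the classical response $\rsp$ in positions $\theta_i = 1$, even though the adversary knows $\theta$ at deletion time and can entangle its auxiliary register with the quantum proof arbitrarily. The standard BB84 monogamy-of-entanglement game is not directly in this form because here both the verifier and the deleter have full knowledge of $\theta$; the correct tool is likely a cut-and-choose / search-to-decision reduction showing that simultaneously producing a valid $d$ and later guessing the $\theta_i=1$ bits is as hard as breaking BB84 conjugate coding with basis leakage, combined with a compressed-oracle analysis to handle $H$-queries on $(x,\com,\theta)$ made during or after deletion. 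Carrying out this quantum-information-theoretic core, and ensuring that the random-oracle reprogramming of step (ii) commutes correctly with the adversary's adaptive quantum queries, will be where the bulk of the technical work lies.
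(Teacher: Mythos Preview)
Your proposal has a genuine gap in the deniability argument, and it stems from misidentifying \emph{what} the adversary must be forced to forget.

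In the certified-deniability definition (following Pass), the simulator does \emph{not} get to program the global random oracle: the true $H$ is sampled once, and the distinguisher receives oracle access to this same $H$ alongside the simulator's output. Your step~(ii) has the simulator reprogram $H$ at the single point $(x,\com,\theta)$. But $\com$ and $\theta$ are handed to the adversary \emph{classically} as part of the proof; they survive deletion intact and are available to the distinguisher. The distinguisher therefore simply queries the true $H$ at $(x,\com,\theta)$, compares with the challenge embedded in the simulated transcript, and immediately detects the reprogramming. No adaptive-reprogramming lemma helps here: those lemmas require the reprogrammed point to have high min-entropy from the adversary's view, whereas $(x,\com,\theta)$ has zero entropy once the proof is issued. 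Your step~(i) argues that the adversary forgets the \emph{response bits} $\rsp$, but that is the wrong quantity---what must be forgotten is the \emph{oracle-query location} at which the simulator cheats.

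The paper's construction is engineered around exactly this point. Rather than encoding $\rsp$ in BB84 states, it computes the entire Fiat-Shamir transcript \emph{in superposition over a coset state}: the proof is (roughly) $\sum_{a\in A}(-1)^{s\cdot a}\ket{a}\otimes\ket{\FS^{H(a\concat\cdot)}_\Sigma(x,w)}$ for a random $\lambda/2$-dimensional subspace $A$. Deletion means returning the whole state; the issuer uncomputes the transcript register and projects onto $\ket{A_{0,s}}$. By direct-product hardness of subspace states, a passing certificate implies the adversary retains \emph{no} nonzero $a\in A$. The simulator then reprograms $H$ only at inputs prefixed by some $a\in A\setminus\{0\}$---locations the distinguisher provably cannot find---and uses the HVZK simulator of $\Sigma$ branch-by-branch (with subexponential security to absorb the $2^{\lambda/2}$ hybrids). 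The crucial conceptual shift is that the oracle-query index $a$ itself is the quantum object that gets erased, not the prover's response.
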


Notably, this result bypasses Pass's impossibility for deniable NIZKs in the random oracle model. This can be seen as a unique application of quantum mechanics to force a user to ``forget'' information in a quantifiable way. For more discussion, see the technical overview.

\paragraph{Evidence Against Plain Model Constructions.} 
Since our results are in the QROM, a natural next question is whether we can hope to achieve certified deniability (of either variety) in the \emph{plain} model. Unfortunately, it appears that there are significant barriers to doing so. We show that if the security proof treats the adversary as a black-box, then it cannot hope to show either of the above notions.

\begin{theorem}[Informal]
   There is no signature/NIZK with certified deniability in the plain model with a security proof that makes black-box use of the adversary.
\end{theorem}

Thus, any valid proof of security for a plain-model construction must be non-black-box. Although there have been many improvements in non-black-box techniques in the past decade, e.g.~\cite{STOC:BitKalPan18,STOC:BitShm20}, 
the technique we use for the barrier is particularly amenable to obfuscation, which has also seen vast improvements recently~\cite{STOC:JaiLinSah21,EC:JaiLinSah22,C:JLLW23}.
We leave a more definitive answer to the question of certified deniability in the plain model to future work.

    \section{Technical Overview}

\paragraph{Evidence-Collection Attacks.} 
We begin by discussing how an adversarial verifier can retain irrefutable evidence of a signature even after it has been revoked in the existing revocable signature constructions. We will consider MPY24's construction as an example, though the same approach works for other existing constructions (such as the public-key quantum money based approach).
MPY24 begins by defining and constructing a new primitive called ``2-tier tokenized signatures''. 
This primitive enables a simple construction for revocable signatures that can sign only the message ``$0$''.
They extend this construction to support multiple messages by generating a fresh key pair $(\sk_T, \verkey_T)$, together with a token $\ket{T}$, whenever the signer wishes to sign a new message $m$. Then, the signer signs $\verkey_T \concat m$ under their global signing key to associate the new token with $m$.

The problem lies in this final step. Every signature on a message $m$ consists of a \emph{classical} signature $\sigma$ on $\verkey_T \concat m$, together with the revocable token $\ket{T}$. However, ignoring $\ket{T}$ completely, $\sigma$ is already irrefutable proof that the signer signed $m$!

\subsection{Definitions: Simulation-Style} 

We provide an in-depth discussion of our definitions and their merits. As mentioned previously, we use a simulation-style definition, which offers a robustness even against classes of attacks that have not yet been explicitly considered. 
As demonstrated by the evidence-collection on MPY24's revocable signatures, such robustness is invaluable for avoiding subtleties that creative attackers might take advantage of.

\paragraph{Signatures.}
In certified deniability for signatures, we consider the following scenario that might happen in the real world:
\begin{enumerate}
    \item The adversarial verifier $V^*$ receives a public verification key $\verkey$.
    \item $V^*$ receives signatures on some set of messages $M$ over a period of time. At some point, it decides to delete a signature $\sigma_{m^*}$ on some $m^*$ by outputting a certificate $\cert$, which it sends to the signer together with $m^*$.\footnote{In the technical sections, we actually require that the adversary declares that it will delete $\sigma_{m^*}$ when it queries for a signature on $m^*$, instead of deciding later. We can remove this requirement using standard complexity leveraging techniques.}
    \item The signer decides whether to accept or reject $\cert$.
\end{enumerate}
To transform this scenario into a security experiment $\sigcdenexp_{V^*}(\sk, \verkey)$, we replace $V^*$'s output with $\bot$ if the signer rejects $\cert$. 
Additionally, the experiment outputs the list of messages $M\backslash\{m^*\}$ whose signatures $V^*$ received, \emph{not including $m^*$}. 

Certified deniability requires that there exists a simulator $\Sim$ which receives $\verkey$ and access to whichever signatures it wants, then produces a view that is indistinguishable from $\sigcdenexp_{V^*}(\sk, \verkey)$ \emph{even when the list of messages $\Sim$ queries for is included in the output}:
\begin{gather*}
   \left\{(\verkey, \mathsf{Sig\text{-}CDen\text{-}Exp}(\sk, \verkey))\right\}_{(\sk, \verkey) \gets \Gen(1^\secpar)}
    \approx
    \left\{(\verkey, M_\Sim, \Sim^{\Sign(\sk, \cdot)}  \right\}_{(\sk, \verkey) \gets \Gen(1^\secpar)}
\end{gather*}
Since the real adversary gets a ``free signature'' on $m^*$ that is not recorded in $M\backslash\{m^*\}$, but all of $\Sim$'s queries are recorded, $\Sim$ must be able to produce its view \emph{without} receiving a signature for $m^*$.\footnote{Alternatively, we could try to give $\Sim$ a restricted signing oracle that never signs $m^*$. However, $m^*$ is not determined until late in the experiment, so such a restricted signing oracle is not well-defined.}

In particular, this definition means that if $\adv$ never asks for any signatures except $m^*$, then $\Sim$ must be able to simulate its view without receiving any signatures at all! Thus, the definition requires that even the \emph{number} of signatures which have been given out is deleted when all of them are deleted.

\paragraph{NIZKs.} Certified deniability for NIZKs is defined similarly. In the real execution $\nizkcdenexp$, the prover uses a witness $w$ to prove the truth of some statement $x$. The simulator must reproduce the (adversarial) verifier's view given only $x$, without a witness $w$.
The reader may notice that this definition seems almost identical to the standard zero knowledge definition. Indeed, Pass~\cite{C:Pass03} points out that zero knowledge is inherently deniable in the plain model; it is indistinguishable whether the transcript given to you was the result of an honest execution, or the result of running the simulator\footnote{In the case where the statement is hard to decide, a simulator could even ``prove'' a false statement.}.
Of course, NIZKs do not exist in the plain model~\cite{JC:GolOre94} and thus to achieve NIZKs, we must move to either the random oracle model or common reference string (CRS) model. Therefore, it is a moot point to consider the deniability of NIZKs in the plain model, and we need to consider its deniability in the random oracle model or the common reference string model.

\paragraph{Certified Deniability in the QROM.} Traditionally for NIZKs in the random oracle model, the simulator is allowed to choose the random oracle, or similarly it is allowed to choose common reference string in the CRS model. However, Pass points out that in the real world, the random oracle is fixed once and for all. It is not realistic to believe that an arbitrary prover backdoored the random oracle. So, they could not have run the simulator.

Following Pass, we define certified deniability in the QROM so that the simulator does not have the ability to choose the random oracle. Specifically, in the ideal world, the simulator has access to a truly random oracle $H$, which also appears in the experiment output:
\[
    \{(H, \nizkcd^H(x,w))\}_{H \samp \mathcal{H}}
    \approx 
    \{(H, \Sim^H(x))\}_{H \samp \mathcal{H}}
\]
where $\mathcal{H}$ is the set of all functions $\zo^{p(\lambda)} \to \zo^{q(\lambda)}$.

In the classical setting, if the simulator were to internally pretend that some $H(y) = v$, it would be immediately caught by any distinguisher who queried the real $H$ on $y$. 
Pass utilizes the non-programmability enforced by this definition to show that (in the classical setting) deniable NIZKs are in fact \emph{impossible} even with a random oracle.

\paragraph{The Power of Forgetting.} \justin{Not certain what the comment "less clear" on the title means.}
We show that it is possible to avoid this issue in the quantum setting. 
Quantum mechanics offers the intriguing capability to force the adversary to \emph{forget} information it queried on when it produces a valid deletion certificate.
This enables the simulator to internally pretend that $H(y) = v$ without later being detected by the distinguisher querying the real $H$ on $y$, since the distinguisher never learns $y$.

We remark that in real world, where $H$ is heuristically implemented, the third party does not actually believe that the implementation changed to be $H(y) = v$. 
However, we can heuristically say that if they do not know $y$, then they do not know anything about $H(y)$, so they have no evidence against $H(y) = v$ in the actual implementation either.

\paragraph{On Active vs Passive Primitives.} 
Basing security definition for certified deletion on the simulation paradigm seems to have much stricter requirements. However, this is not the case for all primitives. In particular, passive primitives - such as in encryption or commitments - where the adversary is not intended to learn anything before deleting, naturally satisfy a simulation-style definition. For example, in encryption (the original primitive which certified deletion was added to), the adversary should not be able to tell the difference between an encryption of $0$ and an encryption of $1$ after deletion, even if after deletion it obtains the secret key or becomes computationally unbounded. In this case, this simulator can simply encrypt a fixed message, since it will anyway become indistinguishable after deletion.

On the other hand, the simulation paradigm introduces new security guarantees for active primitives, where the adversary is intended to learn something before it deletes. For example, in a signature, it learns that the signer endorses a particular message. Other ``active''  primitives with certified deletion include zero knowledge~\cite{C:HMNY22} and obfuscation~\cite{EC:BGKMRR24}. 

\paragraph{Before-the-Fact Coercion.} \cite{STOC:ColGolVaz22} previously introduced the idea of ``before-the-fact'' coercion for deniable encryption. This notion considers a coercer who approaches a victim before the victim computes a ciphertext (e.g. to cast a vote), then forces them to encrypt a particular message by requiring them to produce evidence that they encrypted that message. Even with this modification, the coercer should not be able to identify whether the encryptor encrypted a desired message $m$, or something else. Coledangelo, Goldwasser, and Vazirani showed that this notion is classical impossible, but possible to achieve using quantum techniques.

Signatures and NIZKs with certified deniability are naturally deniable even against before-the-fact coercion of the verifier. Even if the victimized verifier is forced to use a special device or obfuscated program to collect evidence about the received signature/NIZK, this evidence is rendered useless after the verifier produces a valid certificate. 
Classically, this notion is not possible even in the random oracle model, since the auxiliary program could record any ROM input-output behavior when run by the victim, enabling the coercer to check its correctness later. However, in the quantum setting, any such record might prevent the verifier from outputting a valid certificate since the certificate generation requires ``forgetting'' QROM queries. 
Furthermore, if the victim is forced to collect evidence anyway by not giving a valid certificate, then the authority can at least identify that they have been subject to coercion!

\subsection{Constructions}

We now give an overview of our constructions, starting with signatures.

\paragraph{Starting Point.} We start by recalling how \cite{TQC:MPY24} constructs revocable signatures for the message ``$0$''. As a base, they sample a random pair of values $(x_0, x_1)$ and a random phase $c\gets\{0,1\}$. Then, they compute one-way function images $f(x_0)$ and $f(x_1)$ and output 
\[
    \Sign(\sk, f(x_0) \concat f(x_1)),\quad \ket{\psi}\coloneqq \ket{x_0} + (-1)^c\ket{x_1}
\]
Given this, a verifier can measure $\ket{\psi}$ and check the result matches either $f(x_0)$ or $f(x_1)$. This operation can be done coherently to avoid disturbing the state, enabling the verifier to delete after it verifies.
The deletion certificate is obtained by measuring $\ket{\psi}$ in the Hadamard basis, which results in a vector $d$ such that $c = d\cdot (r_0 \oplus r_1)$. Using knowledge of $c$, $r_0$, and $r_1$, the signer can verify this certificate.

\cite{TQC:MPY24} shows that if the adversary cannot produce both such a $d$ and one of $x_0$ or $x_1$, except with probability $1/2$. Then, they show that repeating the scheme in parallel amplifies the difficulty of this task to a $\negl$ success rate; no adversary can obtain both a Hadamard basis measurement of every index and a string containing one element from every pair $(x_0, x_1)$. To extend the scheme to other messages $m$, they instead sign $m$ together with all of the one-way function images.

\paragraph{Attempt at a Fix.} As discussed previously, the weakness in MPY24's construction is that it directly signs the message $m$ under the global verification key. A natural attempt to eliminate this weakness is to instead sign something that is binding on $m$, but appears independent after deletion. For example, if we could somehow enforce that the adversary ``forgets'' a random string $r$ after deletion, then the signer could instead give out a signature on a random oracle image $H(r\concat m)$, along with $r$ and $m$.

This approach gives us a template for enforcing the adversary's forgetfulness. If we were to use $x_0$ and $x_1$ as the values of $r$, then we could give out $\ket{\psi}$ and a signature on $H(x_0\concat m)\concat H(x_1\concat m)$ (as opposed to a signature on $m\concat f(x_0) \concat f(x_1)$). Of course, this only ensures $1/2$ ``forgetfulness''. To amplify, we could secret share $m$ using the additive $\secpar$-of-$\secpar$ threshold secret sharing scheme, obtaining $m_1, \dots, m_\secpar$ such that $m$ $m_1\oplus \dots \oplus m_\secpar = m$, and sign each share $m_i$ using the modified scheme:
\[
    \Sign\left(\sk, H\left(r^i_0\concat m_i\right) \concat H\left(r^i_1\concat m_i\right)\right),\ m_i,\ \ket{\psi_i}\coloneqq \ket{r^i_0} + (-1)^{c_i}\ket{r^i_1}
\]
If the adversary forgets even a single $r_i^1$, they would be unable to verify $m_i$. However, changing any $m_i$ changes the signed message $m$.

\paragraph{Proof Technique: Forgetful Local Programming.} Although this construction turns out to not fully satisfy certified deniability, it will be helpful in demonstrating our new technique: forgetful local programming. The simulator will require a signature $\sigma_0$ on any message, say on ``0''(note that this leaks the number of messages signed, if not which ones). It will attempt to locally convince the verifier that this signature is actually for $m$.

To do so, it picks a random index $i$ and computes $m_i' = m_i \oplus m$. If $m_i'$ were swapped for $m_i$, then the secret-shared message becomes $(m_1\oplus \dots \oplus m_\secpar) \oplus m = 0 \oplus m = m$. To create this change, the simulator creates a modified oracle $H'$, which swaps the behavior of $r^i_b \concat m_i$ with $r^i_b \concat m_i'$, i.e. 
\begin{gather*}
    H'(r^i_b \concat m_i') \coloneqq H(r^i_b \concat m_i)
    \\
    H'(r^i_b \concat m_i) \coloneqq H(r^i_b \concat m_i')
\end{gather*}
Then, it runs the adversarial verifier $V^*$ using $H'$ and $\sigma_0$. If $H'$ were the true oracle, then $\sigma_0$ is actually a signature on $m$.

The crux of the argument comes down to showing that any third party distinguisher cannot distinguish whether it has access to $H$ or $H'$. Since $\sigma_0$ is a signature on $0$ under $H$, or a signature on $m$ under $H'$, it cannot tell which message $\sigma_0$ was originally associated with. To argue the crux, we will show that if $V^*$ produces a valid certificate, then the distinguisher cannot find either $r^i_0$ or $r^i_1$.

\paragraph{The Need for a New Approach.} Unfortunately, this approach only achieves $1/\poly(\secpar)$ security.\footnote{Actually, this cannot be black-box reduced to the hardness of finding $r^i_0$ or $r^i_1$ discussed above, because of a large loss that occurs when extracting QROM queries. This problem can be fixed by outputting the whole signature as the certificate and doing some additional technical work, though we omit the details since it is subsumed by our later construction.} The issue is that the adversary could simply guess $i$, and is correct with inverse polynomial probability. If it deletes the other indices honestly, then it can keep one of $r_0^i$ and $r_1^i$ with probability $1/2$ even while providing a valid deletion certificate. Then, given $r^i_b$, the distinguisher could immediately detect the local reprogramming. Without the forgetful local programming technique, the tools of constructing certifiably deniable NIZKs becomes very close to what is available in the classical setting, where we know the task is impossible.


One might hope that by increasing the number of values in superposition, e.g. by constructing $\sum_{i}(-1)^{c_i}\ket{r_i}$ instead of $\ket{r_0} \pm \ket{r_1}$, the likelihood that the adversary can keep an $r_i$ while simultaneously producing a valid deletion certificate can be decreased. While this is true, a superposition over many $r_i$ requires a proportionate number of signed random oracle images $H(r_i\concat m)$. Increasing the ``forgetfulness'' of the adversary to overwhelming probability would blow up the size of the signature super-polynomially.

Another issue is that this approach does not allow the \emph{number} of signatures which have been given out to be deleted. Although the simulator can attempt to locally pretend that a signature on $m$ is actually a signature on $m'$ by locally reprogramming the oracle, it needs to have a signed oracle image $H(r\concat m)$ to begin with.

\paragraph{Idea 1: Subspace States.} To increase the ``forgetfulness'' of the adversary beyond polynomial factors, we turn to subspace states~\cite{AC12}. A subspace state $\ket{A}\propto \sum_{a\in A}$ is a uniform superposition over elements of an $\secpar/2$-dimensional subspace $A$ of $\F_2^\secpar$. 
Using them, we can avoid the exponential blow-up by signing the random oracle images in superposition:
\[
    \ket{\sigma_m} \coloneqq \sum_{a\in A\backslash\{0\}} \ket{a} \otimes \ket{\Sign(\sk, H(a\concat m))}
\]

The verifier can check such a state by (1) coherently running the signature verification procedure in the computational basis and (2) coherently checking that the signed message matches $H(a\concat m)$. To delete the signature, they can simply return the whole state. 
If the signing algorithm is deterministic, then the signer can check the certificate by uncomputing the signature and random oracle image, then checking that the certificate is now $\ket{A}$ using a projection onto $\ket{A}$. We note that it is possible to make the signing algorithm deterministic by coherently deriving its randomness from a PRF evaluated at $a$, e.g. $H(k\concat a)$ for random $k$.

We argue that this deletion check enforces ``forgetfulness''. Due to the direct product hardness property \cite{Quantum:BS23}, we know that given a random subspace state $\ket{A}$, it is hard to find both a vector in $A\backslash\{0\}$ and a vector in $A^\perp\backslash\{0\}$, even given access to an oracle which decides membership in $A$ and $A^\perp$. 
The membership oracle can be used to check if the returned certificate indeed contains an intact copy of $\ket{A}$. If the signer is able to recover $\ket{A}$ from the certificate, then it could obtain a vector in $A^\perp$ by measuring it in the Hadamard basis. 
Whenever this happens, the verifier cannot also remember \emph{any} vector in $A$, other than $0$. Thus, we can use the forgetful local programming technique to obtain $\negl(\secpar)$ security loss.

\paragraph{Idea 2: Fiat-Shamir in Superposition.} This still leaves the issue of leaking the number of signatures. To solve this, we use the Fiat-Shamir paradigm in superposition. 
Fiat-Shamir transforms a sigma protocol\footnote{A sigma protocol is a 3-message public-coin argument of knowledge which is zero-knowledge against an honest verifier whose second message is known ahead of time.} into a signature scheme. The signer samples a random secret key $\sk$ and gives out $\verkey = f(\sk)$ as the verification key, where $f$ is a one-way function.
To sign a message me, the signer computes a sigma protocol proving knowledge of an $\sk$ matching $\verkey$. Fiat-Shamir uses the first message $s_1$ of the sigma protocol to derive the second message $s_2$ as $H(m\concat \verkey \concat s_1)$.

Performing the Fiat-Shamir signature in superposition yields:
\[
    \ket{\sigma_m} \coloneqq \sum_{a\in A\backslash\{0\}} \ket{a} \otimes \ket{s_1^a, s_2^a = H(a\concat m\concat \verkey \concat s_1), s_3^a}
\]
where $(s_1^a, s_3^a)$ are the prover's messages in the sigma protocol using randomness $H(k\concat a)$. Verification and deletion are defined similarly to the signature construction above.

Also similarly to the previous construction, if the certificate is valid, then the verifier must have ``forgotten'' every element of $A\backslash\{0\}$. 
Thus, even if it knew some transcript $(s_1^a, s_2^a, s_3^a)$ of the sigma protocol, it could not prove to a third party that $s_2^a$ was really derived using $H(a\concat m\concat \verkey \concat s_1)$.\footnote{It is tempting to try to use a single first message $s_1$ and derive the second and third messages of the sigma protocol in superposition using $a$. However, this would lead to answering multiple challenges using the same first message, which may not be possible with a simulated $s_1$.}
Instead, the third party would suspect that $s_2^a$ was chosen carefully to match a faulty $s_1$.

Crucially, by locally programming $H$ at points that include some $a\in A$, the simulator can simulate every Fiat-Shamir transcript $(s_1^a, s_2^a, s_3^a)$ \emph{only using knowledge of $\verkey$} (or of the statement $x$). Thus, in the signature case, it no longer needs to receive anything signed under $\verkey$ to do its job.

\paragraph{NIZKs.} A useful consequence of using Fiat-Shamir to construct signatures is that the Fiat-Shamir transform can also be used for turning a sigma protocol into a NIZK. The construction is similar to the signature case, except that $s_2^a = H(a\concat m\concat \verkey \concat s_1^a)$ is replaced with $H(a\concat x \concat s_1^a)$, where $x$ is the NP statement being proven.

\paragraph{Other Technical Challenges.} 
We briefly mention two additional technical challenges that appear in our construction. First, it is not immediately obvious that Fiat-Shamir can be simulated in superposition, even if it is post-quantum secure. Previous works have addressed such issues by using collapsing protocols~\cite{EC:Unruh16} or small-range distributions~\cite{FOCS:Zhandry12}. However, both of these techniques require collapsing the argument/signature to a large degree, which is at odds with deletion: if a superposition state is indistinguishable from a a measured state, we are almost back to the classical case and there is no way to delete! To avoid this issue, we use complexity leveraging to switch each of the superposed transcripts to be simulated, one at a time.

Second, it is not immediately obvious that Fiat-Shamir is sound in a structured superposition as above. The soundness of Fiat-Shamir in the quantum setting is a highly nontrivial task, but has been shown under certain conditions in the case where the resulting argument is classical~\cite{C:DFMS19,C:LiuZha19}. To argue soundness of Fiat-Shamir in superposition, we actually use coset states. We show that if the coset offset is not known, then the coset state appears to have been measured in the computational basis.
Thus, we can treat our construction as having a classical argument/signature when proving soundness.

\subsection{Black-Box Barriers to Plain Model Constructions.}

Finally, we give an overview of the black-box barrier for the plain model. At a high level, we consider an adversary $\adv$ and a distinguisher $D$ which, as part of their auxiliary input, share a program that includes a key pair $(\widetilde{sk}, \widetilde{\verkey})$ for an internal (post-quantum) signature scheme. 
The program on some input an alleged signature $\ket{\sigma}$ for a message $m$ under key $\verkey$, verifies it and if the check passes, it signs $m\concat \verkey$ using the internal signature key $\widetilde{\sk}$. 
This operation to create a proof that $m$ was signed is gentle because of the correctness of the candidate scheme, so $\adv$ can still generate a valid certificate using the honest deletion algorithm, after it has obtained a proof for the distinguisher that $m$ was signed. Now it has produced both a valid certificate and a post-quantum signature on $m\concat \verkey$. The distinguisher can simply check the latter using $\widetilde{\verkey}$.

Observe that in the real world, $D$ will almost always obtain a valid signature on $m\concat \verkey$. On the other hand, the simulator cannot hope to extract such a signature using black-box access to $\adv$, unless it is able to forge signatures for the candidate scheme. 

On a technical level, the analysis requires generalizing a technique introduced by \cite{bbbv97} for analyzing the behavior of an oracle algorithm with a reprogrammed \emph{classical} oracle to handle oracles that do quantum computation instead. 
Roughly, we show that if an oracle algorithm is able to distinguish between oracle access to two unitaries $U_0$ and $U_1$, then outputting its query register at a random timestep produces a mixed state with noticeable probability mass on pure states $\ket{\psi}$ where $U_0\ket{\psi}$ and $U_1\ket{\psi}$ are proportionally far in trace distance.
The generalized technique may be of independent interest.


\subsection{Related Works}

\paragraph{Quantum Deniability.} \cite{STOC:ColGolVaz22} revisit the problem of deniable encryption in the quantum setting. In classical deniability, the encryptor should be able to produce ``fake'' proof to the adversarial coercer how a given ciphertext is actually an encryption of some other message. Coledangelo, Goldwasser, and Vazirani propose a uniquely quantum spin on the task: by computing the ciphertext, any explanation for it is destroyed.
Although this has similarities to our setting, their result is quite different. They ensure that the third-party coercer never sees the explanation. In contrast, the adversarial verifier necessarily sees the ``explanation'' for signatures/NIZKs - the signature/NIZK itself - but later is forced to ``forget'' it.

\paragraph{Certified Deletion.} Certified deletion was first proposed by Broadbent and Islam for encryption~\cite{TCC:BroIsl20}. It has since been generalized to a variety of other primitives, e.g.~\cite{C:HMNY22,EC:HKMNPY24,ITCS:Poremba23,TCC:AnaPorVai23,EC:AKNYY23,C:BarKhu23,EC:BGKMRR24,C:BR24,TQC:MPY24}. To the best of our knowledge, the only two works to have considered any notion of simulation in defining certified deletion are \cite{C:HMNY22} and \cite{EC:BGKMRR24}. \cite{C:HMNY22} considers certified everlasting zero knowledge, which uses a simulation definition as a result of standard definitions for zero knowledge. \cite{EC:BGKMRR24} consider a simulation-style definition of obfuscation with certified deletion in the structured oracle model as a side result, inspired by definitions of ideal obfuscation. In contrast, signatures are not typically considered to be a ``simulation primitive'', and NIZKs require either the random oracle or CRS models, which require special care for deniability.

\paragraph{Unclonability.} Unclonability prevents an adversary from transforming an object (such as a program) into two functioning copies of the object~\cite{Aar09}. It is closely related to certified deletion, since a functional copy of the object can be considered as the ``certificate''. Previously, Goyal, Malavolta, and Raizes~\cite{TCC:GMR24} considered a related notion to certified deniability under the name of ``strongly unclonable proofs''. In a strongly unclonable proof, an adversarial man-in-the-middle (MiM) who receives a simulated proof of some (potentially false) statement $x$, then interacts with two sound verifiers to prove statements $\widetilde{x}_1$ and $\widetilde{x}_2$. Strong unclonability guarantees that no MiM can convince \emph{both} verifiers of false statements.
GMR showed that in general, strongly unclonable NIZKs do not exist. Fortunately, their techniques do not extend to certified deniability. GMR's impossibility relies on an interactive verification, during which the MiM can forward messages between the two verifiers. In certified deniability, the NIZK is deleted before any messages reach the second verifier. In general, our definition is also more robust; for example, it rules out the possibility of the third-party verifier who accepts false statements with probability $1/2$ from being convinced by a deleted NIZK with even probability $1/2 + \epsilon$.
    \ifsubmission\else
        \section{Preliminaries}

All assumptions mentioned are post-quantum unless otherwise specified.
We write $\mathsf{Func}(\cX, \cY)$ to be the set of all functions $f: \cX \rightarrow \cY$. We say that two distributions $A$ and $B$ are computationally indistinguishable, denoted as $A \approx_c B$, if no QPT distinguisher $D$ with poly-size auxiliary input register $\regi{D}$ can distinguish between them with noticeable advantage. When we wish for a finer-grained treatment, we say that $A$ and $B$ are $\epsilon$-computationally indistinguishable, denoted as $A\approx_c^\epsilon B$, if the distinguishing advantage is instead bounded by $\epsilon$. In the case where $A$ or $B$ depends on a poly-sized auxiliary input register $\regi{A}$, the contents of $\regi{A}$ and $\regi{D}$ may be entangled.

\subsection{Quantum Computing}

\paragraph{Subspace States.} For any subspace $A\subset \{0,1\}^\secpar$, the subspace state $\ket{A}$ for $A$ is
\[
    \ket{A} :\propto \sum_{a\in A} \ket{a}
\]

Aaronson and Christiano~\cite{AC12} show that the projector onto $\ket{A}$ can be implemented using queries to membership oracles $O_A$ and $O_{A^\perp}$ that decide the query's membership in $A$ (respectively, $A^\perp$).

\begin{lemma}[\cite{AC12}]\label{lem:subspace-proj}
    Let $A\subset \{0,1\}^\secpar$ be a subspace. Let $P_{A} = \sum_{a\in A}\ketbra{a}$ and $P_{A^\perp} = \sum_{a\in A^\perp}\ketbra{a}$ be projectors onto the space spanned by elements of $A$ and $A^\perp$, respectively. Then
    \[
        H^{\otimes n} \mathbb{P}_{A^\perp} H^{\otimes n} \mathbb{P}_{A} = \ketbra{A}
    \]
\end{lemma}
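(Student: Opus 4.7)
The plan is to compute the action of $H^{\otimes n} P_{A^\perp} H^{\otimes n} P_A$ on an arbitrary vector $\ket{\psi}$ and verify that it agrees with the action of $\ketbra{A}$. By linearity, it suffices to compute the action on each computational basis vector $\ket{a}$ with $a \in A$, and to handle any $\ket{\psi}$ for which $P_A \ket{\psi} = 0$ (which should trivially map to zero on both sides).

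First, I would expand $H^{\otimes n} \ket{a} = 2^{-n/2} \sum_{y \in \{0,1\}^n} (-1)^{a\cdot y} \ket{y}$ and then apply $P_{A^\perp}$ to restrict the sum to $y \in A^\perp$. The key structural observation, which is the heart of the argument, is that $a \cdot y = 0$ for every $a \in A$, $y \in A^\perp$. This collapses all the signs to $+1$, giving
\[
P_{A^\perp} H^{\otimes n} \ket{a} = 2^{-n/2} \sum_{y \in A^\perp} \ket{y} = \sqrt{|A^\perp|/2^n}\, \ket{A^\perp} = |A|^{-1/2}\, \ket{A^\perp},
\]
using $|A|\cdot|A^\perp| = 2^n$. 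Then I would use the well-known duality $H^{\otimes n} \ket{A^\perp} = \ket{A}$ (which follows by the same character-sum computation in reverse) to conclude that $H^{\otimes n} P_{A^\perp} H^{\otimes n} \ket{a} = |A|^{-1/2}\, \ket{A}$ for every $a \in A$.

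Finally, I would apply this to a general state. Writing $P_A \ket{\psi} = \sum_{a \in A} \braket{a|\psi}\, \ket{a}$ and using linearity,
\[
H^{\otimes n} P_{A^\perp} H^{\otimes n} P_A \ket{\psi} = |A|^{-1/2}\, \ket{A} \sum_{a \in A} \braket{a|\psi} = |A|^{-1/2}\, \ket{A} \cdot \sqrt{|A|}\, \braket{A|\psi} = \ketbra{A}\, \ket{\psi},
\]
where the middle equality uses the definition $\ket{A} = |A|^{-1/2}\sum_{a \in A}\ket{a}$. Since $\ket{\psi}$ was arbitrary, the operator identity follows.

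There is no real obstacle here; the only thing to be careful about is the bookkeeping of normalization factors, in particular the use of $|A|\cdot|A^\perp| = 2^n$ and the fact that the dual basis identity $H^{\otimes n}\ket{A} = \ket{A^\perp}$ relies on the same orthogonality $A \cdot A^\perp = 0$. The proof is essentially a one-line Fourier/character-sum computation dressed up as an operator identity.
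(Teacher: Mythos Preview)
Your argument is correct and is essentially the standard proof of this fact. Note, however, that the paper does not actually give its own proof of this lemma: it is stated with a citation to \cite{AC12} and no proof is included. Your computation---applying $P_A$, expanding $H^{\otimes n}\ket{a}$, using $a\cdot y = 0$ for $a\in A$, $y\in A^\perp$ to collapse the phases, invoking $|A|\cdot|A^\perp|=2^n$, and then applying the duality $H^{\otimes n}\ket{A^\perp}=\ket{A}$---is exactly the intended Fourier-analytic argument from \cite{AC12}, with the normalization bookkeeping handled correctly.
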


Ben-David and Sattath show that given a random subspace state of appropriate dimension, it is hard to find a vector $v_1 \in A\backslash \{0\}$ together with a vector $v_2 \in A^\perp\backslash\{0\}$, even when given oracle access $O_A$ and $O_{A^\perp}$.

\begin{lemma}[\cite{Quantum:BS23}]\label{lem:subspace-dph}
    Let $A\subset \{0,1\}^\secpar$ be a random subspace of dimension $\secpar/2$ and let $\epsilon > 0$ be such that $1/\epsilon = o(2^{\secpar/2})$. Given one copy of $\ket{A}$ and oracle access to $O_A$ and $O_{A^\perp}$, any adversary who produces $v_1 \in A\backslash\{0\}$ together with $v_2 \in A^\perp\backslash\{0\}$ with probability $\epsilon$ requires $\Omega(\sqrt{\epsilon}2^{\secpar/2})$ queries.
\end{lemma}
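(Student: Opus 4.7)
My plan is to prove this in two stages: first a ``no-oracle'' direct-product lemma saying that a single copy of a random $\ket{A}$ alone does not suffice to produce a pair $(v_1,v_2)\in (A\setminus\{0\})\times (A^\perp\setminus\{0\})$ with non-negligible probability, and then a reduction argument showing that access to $O_A$ and $O_{A^\perp}$ gives at most a Grover-type speedup and hence cannot push the success probability above $O(q^2/2^{\secpar})$. Combining the two yields the claimed bound $q = \Omega(\sqrt{\eps}\,2^{\secpar/2})$.

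For the no-oracle step I would argue as follows. Fix any algorithm $\mathcal{B}$ that takes $\ket{A}$ and outputs a pair $(v_1,v_2)$. Purify $\mathcal{B}$ and consider the joint distribution of $(v_1,v_2,A)$ when $A$ is a uniformly random dimension-$\secpar/2$ subspace. Because $|A|=|A^\perp|=2^{\secpar/2}$, the marginal probability that a \emph{fixed} pair of non-zero vectors $(u_1,u_2)$ satisfies $u_1\in A$ and $u_2\in A^\perp$ is $O(2^{-\secpar})$ over the choice of $A$, so by a union bound over pairs extractable from the bounded support of $\ket{A}$ and an averaging argument one obtains an $O(2^{-\secpar/2})$ bound on success. (The cleanest way to get this is to note that measuring $\ket{A}$ in the computational basis gives a uniformly random $v_1\in A$, then conditioned on that outcome the residual state carries essentially no information about $A^\perp$.) This part is morally the direct-product hardness of Aaronson--Christiano without oracles.

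For the oracle step I would use a standard BBBV-style hybrid argument tailored to the subspace setting: replace $O_A$ and $O_{A^\perp}$ by the constant-zero oracle, and bound the trace distance induced on $\mathcal{B}$'s state by the total query magnitude on the ``support sets'' $A\cup A^\perp$, which have density $O(2^{-\secpar/2})$ inside $\{0,1\}^\secpar$. For a random $A$, an adaptive $q$-query adversary cannot accumulate more than $O(q^2/2^{\secpar/2})$ total amplitude on these sets without already essentially solving the search problem; this is the same averaged-adversary method used to prove Grover optimality. Thus swapping out the oracles changes the output distribution by at most $O(q/2^{\secpar/4})$ in total variation, which combined with the no-oracle bound gives success probability $\eps = O(q^2/2^{\secpar})$, i.e.\ $q = \Omega(\sqrt{\eps}\,2^{\secpar/2})$.

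I expect the main obstacle to be making the oracle-removal step rigorous against an \emph{adaptive} adversary that can use $O_{A^\perp}$ to, in effect, Grover-search for a Hadamard-basis measurement outcome while retaining the computational-basis vector. The naive BBBV hybrid must be strengthened to account for the two correlated oracles $O_A$ and $O_{A^\perp}$ simultaneously; the right quantity to track is the sum of squared query weights on $A\cup A^\perp$, and one needs an averaging argument over the random subspace to keep this quantity small at every step. Once this adaptive accounting is in place, stitching it to the no-oracle direct-product bound is routine and yields exactly the quoted $\Omega(\sqrt{\eps}\,2^{\secpar/2})$ lower bound.
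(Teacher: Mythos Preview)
The paper does not prove this lemma at all; it is quoted from Ben-David and Sattath (building on Aaronson--Christiano), so there is no in-paper proof to compare against. That said, your proposed argument has a genuine gap, independent of what route the cited works take.

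The oracle-removal step via a BBBV hybrid cannot work as stated. The adversary holds the state $\ket{A}$, which is supported \emph{entirely} on $A$; applying $H^{\otimes \secpar}$ yields $\ket{A^\perp}$, supported entirely on $A^\perp$. Thus the adversary can place query amplitude $1$ on $A$ (or on $A^\perp$) whenever it likes: querying $O_A$ on register $\regi{A}$ holding $\ket{A}$ returns $1$ deterministically, so swapping $O_A$ for the zero oracle is detected in a single query. Your assertion that the adversary ``cannot accumulate more than $O(q^2/2^{\secpar/2})$ total amplitude on these sets without already essentially solving the search problem'' is exactly where the argument breaks---finding \emph{one} element of $A$ is trivial given $\ket{A}$, and the whole difficulty is in getting an element of $A$ \emph{and} an element of $A^\perp$ simultaneously. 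The hybrid that deletes the oracles erases precisely the structure that makes the problem hard, and the state $\ket{A}$ prevents you from arguing the deletion is undetectable.

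Your numerics also do not line up: even pretending the hybrid went through, a density-$2^{-\secpar/2}$ set gives a BBBV disturbance of order $q\cdot 2^{-\secpar/4}$, which combined with a no-oracle baseline of $2^{-\Theta(\secpar)}$ yields $q = \Omega(\eps\cdot 2^{\secpar/4})$, not $\Omega(\sqrt{\eps}\,2^{\secpar/2})$. The actual proofs in the cited works do not attempt to strip the oracles. They instead reduce to the unclonability of $\ket{A}$: an adversary producing $(v_1,v_2)$ with probability $\eps$ can be amplified (using the oracles legitimately) to recover enough linearly independent vectors to learn $A$ in full and hence manufacture a second copy of $\ket{A}$, contradicting a direct query lower bound for counterfeiting. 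The missing idea in your plan is that the oracles must be kept and used in the reduction, not hybridized away.
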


\paragraph{Quantum Oracles.}
We recall a result from \cite{bbbv97} that aids in reasoning about reprogrammed oracles. Consider a quantum adversary who has quantum query access to one of two classical oracles $H$ and $H'$. 
They bound the ability of the adversary to distinguish between the two oracles in terms of the amplitude with which it queries on (classical) inputs $x$ where $H(x) \neq H'(x)$. As a simple corollary, if the adversary is able to distinguish the two oracles in a polynomial number of queries, then measuring one of its queries at random produces an $x$ such that $H_0(x) \neq H_1(x)$ with noticeable probability. 

\begin{lemma}[\cite{bbbv97}, Paraphrased]\label{lem:QROM-replacement}
    Let $H$ and $H'$ be oracles which differ on some set of inputs $\cX$. Let $\ket{\psi_i} = \sum_{y} \alpha_{y, i} \ket{\phi_{y,i}} \otimes \ket{y}_{Q}$ be the state of $A^H$ at time $i$, where $Q$ is the query register. Let $\ket{\psi_i'}$ be the state of $A^{H'}$ at time $i$.
    Then for all $T \in \mathbb{N}$,
    \[
        \tracedist(\ket{\psi_T}, \ket{\psi_T'}) \leq \sqrt{T \sum_{i=1}^{T} \sum_{x^*\in \cX} |\alpha_{x^*, i}|^2}
    \]
\end{lemma}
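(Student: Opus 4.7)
The plan is a standard hybrid argument over the $T$ query indices, combined with Cauchy-Schwarz to introduce the $\sqrt{T}$ factor. For $j \in \{0, 1, \ldots, T\}$, I would define $\ket{\psi_T^{(j)}}$ to be the state of $A$ at time $T$ when the first $j$ queries are answered by $H$ and the remaining $T - j$ are answered by $H'$, so that $\ket{\psi_T^{(0)}} = \ket{\psi_T'}$ and $\ket{\psi_T^{(T)}} = \ket{\psi_T}$. The triangle inequality then yields $\|\ket{\psi_T} - \ket{\psi_T'}\| \leq \sum_{j=1}^{T} \|\ket{\psi_T^{(j)}} - \ket{\psi_T^{(j-1)}}\|$, reducing the task to bounding each single-step change.

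The key step is the per-step bound. The states $\ket{\psi_T^{(j)}}$ and $\ket{\psi_T^{(j-1)}}$ are produced by starting from the same pre-query state $\ket{\psi_j}$, applying $O_H$ or $O_{H'}$ respectively, and then evolving under the same sequence of unitaries. Since unitaries preserve the norm of differences, we get $\|\ket{\psi_T^{(j)}} - \ket{\psi_T^{(j-1)}}\| = \|(O_H - O_{H'})\ket{\psi_j}\|$. Expanding $\ket{\psi_j} = \sum_{y} \alpha_{y, j} \ket{\phi_{y,j}}\otimes \ket{y}_Q$ in the query register basis, the two oracles agree on components with $y \notin \cX$, so the difference is supported entirely on components with $y \in \cX$. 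A direct computation on the standard XOR oracle $\ket{y}\ket{z} \mapsto \ket{y}\ket{z \oplus H(y)}$ yields
\[
    \|(O_H - O_{H'})\ket{\psi_j}\|^2 \leq 4\sum_{x^* \in \cX} |\alpha_{x^*, j}|^2.
\]

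To finish, I would sum the per-step bounds and apply Cauchy-Schwarz to the $T$ square roots, giving
\[
    \sum_{j=1}^{T} \sqrt{\sum_{x^* \in \cX}|\alpha_{x^*, j}|^2} \leq \sqrt{T \sum_{j=1}^{T}\sum_{x^* \in \cX}|\alpha_{x^*, j}|^2},
\]
which is exactly where the $\sqrt{T}$ factor in the statement comes from. Since $\tracedist(\ket{\phi}, \ket{\psi}) \leq \|\ket{\phi} - \ket{\psi}\|$ for pure states, chaining the three inequalities yields the claimed bound. The main obstacle is the per-step estimate: one must carefully verify that the XOR oracle factorizes across the query register and auxiliary state so that only the amplitude on $y \in \cX$ enters the single-step difference, and then reconcile the exact constants, which vary across formulations depending on the normalization conventions adopted for amplitudes and trace distance (the statement is after all labelled as paraphrased).
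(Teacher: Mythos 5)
Your argument is correct and follows essentially the same path the paper takes. The paper cites \cref{lem:QROM-replacement} from \cite{bbbv97} without re-proving it, but its proof of the generalization \cref{lem:gen-owth} has exactly the structure you describe: decompose the difference between $A^H$'s and $A^{H'}$'s final states into per-query error terms $\ket{E_t} = (O_{H'}-O_H)\ket{\psi_t}$ (your hybrid chain $\ket{\psi_T^{(j)}} - \ket{\psi_T^{(j-1)}}$ is algebraically the same decomposition), use unitary invariance so that only $\|\ket{E_t}\|$ matters, restrict the per-step difference to query-register components lying in $\cX$ by orthogonality of the $\ket{y}_Q$ basis, and finish with Cauchy--Schwarz (the paper phrases this step as Jensen's inequality, which is equivalent) to produce the $\sqrt{T}$ factor, followed by $\tracedist \le \|\cdot\|_2$ for pure states. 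On the constant you flag: your per-step estimate $\|(O_H-O_{H'})\ket{\psi_j}\|^2 \le 4\sum_{x^*\in\cX}|\alpha_{x^*,j}|^2$ yields an overall bound of $2\sqrt{T\sum_i\sum_{x^*}|\alpha_{x^*,i}|^2}$, a factor of $2$ beyond the stated inequality. The paper's own \cref{lem:gen-owth}, specialized to classical oracles (per-query trace distance $1$ on $\cX$, $0$ elsewhere), produces the same factor of $4$ under the square root, so your caution is warranted: the stated bound is tighter than what either argument literally delivers, and the ``paraphrased'' qualifier should be read as covering exactly this kind of constant-factor slack arising from the choice of oracle model and distance measure.
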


\paragraph{Quantum Random Oracles.} In the quantum random oracle model, all parties have access to an oracle $H \gets \Func(\cX, \cY)$ implementing a random function. To ease notation, we implicitly pad inputs to the random oracle: given $x\in \cX_1$, where $\cX = \cX_1\times\cX_2$, we denote $H(x\concat \vec{0})$ as $H(x)$.

It will be useful to sometimes be able to derive large amounts of randomness from the oracle as if it were a PRF. The following lemma formalizes this treatment.

\begin{lemma}\label{lem:qrom-prf}
    Let $H:\cX_1 \times \cX_2 \rightarrow \cY$ and $G:\cX_2 \rightarrow \cY$ be random oracles. Define $H_k: \cX_2\rightarrow \cY$ by $H_k(v) \coloneqq H(k\concat v)$. If $1/|\cX_1| = \negl(\secpar)$, then
    \[
        \left\{(O_H, O_{H_k}): \begin{array}{cc}
             H\gets \mathsf{Func}(\cX_1 \times \cX_2, \cY)  \\
             k\gets \cX_1
        \end{array}\right\}
        \approx_c
        \left\{(O_H, O_{G}) : \begin{array}{cc}
             H\gets \mathsf{Func}(\cX_1 \times \cX_2, \cY)  \\
             G\gets \mathsf{Func}(\cX_2, \cY)
        \end{array}\right\}
    \]
    where $O_{f}$ denotes oracle access to a function $f$.
\end{lemma}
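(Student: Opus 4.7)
The plan is to reduce the statement to a reprogramming argument using \cref{lem:QROM-replacement}. First, I would observe that the two distributions can be related through a single intermediate sampling process. Sample $H : \cX_1 \times \cX_2 \to \cY$ uniformly, sample $G : \cX_2 \to \cY$ uniformly and independently, and sample $k \gets \cX_1$. Define the reprogrammed function $H'$ by $H'(x, v) = H(x, v)$ for $x \neq k$ and $H'(k, v) = G(v)$. Then $H'$ is itself a uniformly random function from $\cX_1 \times \cX_2$ to $\cY$, and $H'_k = G$, so $(O_{H'}, O_{H'_k}) = (O_{H'}, O_G)$ has the same distribution as the real-world pair $(O_H, O_{H_k})$. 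Meanwhile $(O_H, O_G)$, with $H$ and $G$ independent, is exactly the ideal distribution. Hence it suffices to bound the distinguishing advantage of any QPT distinguisher $D$ between $(O_H, O_G)$ and $(O_{H'}, O_G)$.

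Next I would note that these two pairs of oracles differ only in the first component, and only on inputs of the form $(k, v)$ for $v \in \cX_2$. For any fixed $H, G, k$, I would apply \cref{lem:QROM-replacement} to the first-oracle executions: letting $\ket{\psi_i} = \sum_y \alpha_{y,i} \ket{\phi_{y,i}} \otimes \ket{y}_Q$ denote the state of $D$ right before its $i$-th query to the first oracle when that oracle is $O_H$, the trace distance between the final states under $O_H$ and $O_{H'}$ is at most
\[
\sqrt{T \sum_{i=1}^T \sum_{v \in \cX_2} |\alpha_{(k, v), i}|^2},
\]
where $T = \poly(\secpar)$ is the total number of queries $D$ makes to the first oracle. (Queries to $O_G$ are identical in both worlds and can be folded into the description of $D$.) Crucially, the amplitudes $\alpha_{y,i}$ arise from the execution against $O_H$ and thus do not depend on $k$.

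Now I would average over the uniformly random $k \in \cX_1$ and apply Jensen's inequality to move the expectation inside the square root. Since $\sum_{x, v} |\alpha_{(x,v), i}|^2 = 1$ for every $i$, we have $\E_k\bigl[\sum_v |\alpha_{(k,v),i}|^2\bigr] = 1/|\cX_1|$, giving an overall bound of $\sqrt{T^2/|\cX_1|} = T/\sqrt{|\cX_1|}$ on the expected trace distance, and hence on the distinguishing advantage. Under the hypothesis $1/|\cX_1| = \negl(\secpar)$, this quantity is negligible, which establishes the claimed indistinguishability (in fact statistically, which is stronger than the stated $\approx_c$).

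The main subtlety I would watch for is the order of sampling: the BBBV bound needs amplitudes from an execution that is \emph{independent} of the reprogramming point $k$, which is exactly why I compute them in the $O_H$ world and only then average over $k$. A secondary point is that $D$'s simultaneous access to $O_G$ is harmless because $G$ is sampled jointly with $H$ up front and can be treated as part of $D$'s internal randomness when applying \cref{lem:QROM-replacement} to the first-oracle interface.
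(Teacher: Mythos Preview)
Your proposal is correct and follows essentially the same BBBV-based reprogramming argument as the paper. Both proofs identify the two distributions with a common sampling process $(H, G, k)$ and then compare two oracles that differ only on the slice $\{k\}\times\cX_2$, invoking \cref{lem:QROM-replacement}.

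There is one small but worthwhile difference in execution. The paper rewrites the \emph{ideal} distribution as $(O_{H_{k,G}}, O_{H_k})$ and compares it to the real $(O_H, O_{H_k})$; it then argues that a distinguisher would let one extract a point of the form $k\concat x$, and concludes ``since $k$ is uniform, $\epsilon^2/q^2 \le 1/|\cX_1|$.'' Strictly speaking, in that setup the adversary also sees $O_{H_k}$, which depends on $k$, so the bound $1/|\cX_1|$ is not literally immediate (though the conclusion remains negligible). Your version rewrites the \emph{real} distribution instead, so the baseline execution is against $(O_H, O_G)$, which is genuinely independent of $k$; this makes the averaging step $\E_k[\sum_v |\alpha_{(k,v),i}|^2] = 1/|\cX_1|$ rigorous without any extra justification, and yields the explicit quantitative bound $T/\sqrt{|\cX_1|}$. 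This is a cleaner way to handle the independence subtlety you correctly flagged.
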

\begin{proof}
    For any $k\in \cX_1$, define $H_{k,G}$ as
    \[
        H_{k,G}(x) = 
        \begin{cases}
            G(x') &\text{if } x = k\concat x' \text{ for some } x'
            \\
            H(x) &\text{else}
        \end{cases}
    \]
    The right distribution is identically distributed to
    \[
        \left\{(O_{H_{k,G}}, O_{H_k}) : \begin{array}{cc}
             H\gets \mathsf{Func}(\cX_1 \times \cX_2, \cY)  \\
             G\gets \mathsf{Func}(\cX_2, \cY)
        \end{array}\right\}
    \]
    By \cref{lem:QROM-replacement}, any distinguisher who distinguishes $(O_{H_{k,G}}, O_{H_k})$ from $(O_{H}, O_{H_k})$ with advantage $\epsilon$ in $q$ queries can produce some $x^*$ such that $O_{H_{k,G}}(x^*) \neq O_{H}(x^*)$ with probability $\epsilon^2/q^2$. Whenever this occurs, $x^* = k\concat x$ for some $x$. Since $k$ is drawn uniformly at random from $\cX_1$, it must be the case that $\epsilon^2/q^2 \leq 1/|\cX_1| = \negl(\secpar)$. If $q$ is $\poly(\secpar)$, then $\epsilon$ must be $\negl(\secpar)$.
\end{proof}

\paragraph{Z-Twirl.} It is well-known that adding a random phase to a state is equivalent to measuring the state in the computational basis. Here we present a slightly generalized form of this.
\begin{lemma}\label{lem:general-ztwirl}
    Let $\ket{\psi} = \sum_{x\in \cX} \alpha_x \ket{x} \otimes \ket{\phi_x}$ be a quantum state where $\cX$ is a vector space. Denote $\ket{\psi_s}\coloneqq \sum_{x\in \cX} \alpha_x (-1)^{s\cdot x}\ket{x} \otimes \ket{\phi_x}$ for any $s\in \cX$. Then
    \[
        \sum_{s\in \cX} \ketbra{\psi_s} = \sum_{x\in \cX} |\alpha_x|^2 \ketbra{x} \otimes \ketbra{\phi_x} 
    \]
\end{lemma}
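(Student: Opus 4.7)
The plan is a direct algebraic computation driven by character orthogonality on the $\F_2$-vector space $\cX$. First, I would expand each outer product explicitly:
\[
    \ket{\psi_s}\bra{\psi_s} = \sum_{x,x'\in\cX} \alpha_x \overline{\alpha_{x'}}\,(-1)^{s\cdot x}(-1)^{s\cdot x'}\,\ket{x}\bra{x'}\otimes\ket{\phi_x}\bra{\phi_{x'}}.
\]
Since $\cX$ is a vector space over $\F_2$, the two sign factors combine to $(-1)^{s\cdot(x+x')}$, which is the only place the $s$-dependence lives.

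Next, I would swap the order of summation to pull the $s$-sum inside, past the $x,x'$-dependent operators:
\[
    \sum_{s\in\cX}\ket{\psi_s}\bra{\psi_s} = \sum_{x,x'\in\cX}\alpha_x\overline{\alpha_{x'}}\Big(\sum_{s\in\cX}(-1)^{s\cdot(x+x')}\Big)\ket{x}\bra{x'}\otimes\ket{\phi_x}\bra{\phi_{x'}}.
\]
The crux is then the standard character-orthogonality identity $\sum_{s\in\cX}(-1)^{s\cdot u} = |\cX|\cdot [u = 0]$, applied with $u = x+x'$; only the diagonal terms $x=x'$ survive, and each contributes $|\alpha_x|^2\,\ket{x}\bra{x}\otimes\ket{\phi_x}\bra{\phi_x}$. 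Collecting these gives the claimed identity (up to an overall normalization factor of $|\cX|$, which is customarily absorbed into the definition so that the phase-averaging becomes a bona fide quantum channel, i.e.\ the Z-twirl).

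There is no genuine obstacle here; this is a one-line Fourier computation. The only subtlety I would flag explicitly in the writeup is that the lemma does \emph{not} require the $\ket{\phi_x}$ to be orthonormal: the off-diagonal ($x\neq x'$) cross terms are annihilated purely by character orthogonality on the $\cX$-register, independently of the inner products $\braket{\phi_x|\phi_{x'}}$. This is precisely what makes the statement "general" in the sense advertised by the lemma's name, and it is the only step where the $\F_2$-vector-space structure of $\cX$ is genuinely used.
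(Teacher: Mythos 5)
Your proposal is correct and takes essentially the same route as the paper's proof: expand $\ketbra{\psi_s}$, combine the phases into $(-1)^{s\cdot(x+x')}$, swap the order of summation, and invoke character orthogonality over $\cX$ to annihilate the off-diagonal terms. You have also correctly flagged the normalization factor of $|\cX|$, which the paper's lemma statement and proof both silently drop (the left-hand side as written has trace $|\cX|$ while the right-hand side has trace $1$); the clean fix is to replace the sum by the average $\frac{1}{|\cX|}\sum_{s\in\cX}$.
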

\begin{proof}
    We compute
    \begin{align*}
        \sum_{s\in \cX^n} \ketbra{\psi_{s}}
        &=  \sum_{s\in \{0,1\}^n} \alpha_{x_1} \overline{\alpha_{x_2}}\sum_{x_1, x_2 \in \cX} (-1)^{(x_1 - x_2)\cdot s} \ket{x_1}\bra{x_2}\otimes \ket{\phi_{x_1}}\bra{\phi_{x_2}}
        \\
        &= \sum_{x_1, x_2 \in \cX} \alpha_{x_1} \overline{\alpha_{x_2}} \ket{x_1}\bra{x_2}\otimes \ket{\phi_{x_1}}\bra{\phi_{x_2}} \sum_{s\in \{0,1\}^n} (-1)^{(x_1 - x_2)\cdot s} 
        \\
        &=\sum_{x \in \cX} |\alpha_{x}|^2 \ketbra{x}\otimes \ketbra{\phi_a}
    \end{align*}
\end{proof}

\subsection{Argument Systems}\label{sec:prelims-args}

\begin{definition}[Argument]
    An argument system $(P, V)$ for an NP language $\lang$ is a (potentially interactive) protocol between a prover $P$ and a verifier $V$ where $P$ inputs a statement and a witness and $V$ outputs accept or reject. It should satisfy two properties:
    \begin{itemize}
        \item \textbf{Correctness.} If $w$ is a witness for $x\in \lang$, then at the end of the execution, $V$ outputs accept.
        \item \textbf{Soundness.} For every adversarial prover $P^*$, if $x\notin \lang$, then at the end of an execution $\langle P^*, V\rangle$, $V$ outputs reject with probability $1-\negl$. If this holds for QPT $P^*$, we call the soundness computational. If it holds for unbounded $P^*$, we call the soundness statistical.
    \end{itemize}
\end{definition}

\begin{definition}[Zero-Knowledge]
    An argument system $(P, V)$ for an NP language $\lang$ is zero knowledge if there exists a QPT algorithm $\Sim$ such that for all QPT adversaries $V^*$ with auxiliary input register $\regi{V^*}$ and all statement/witness pairs $(x,w)$ where $x\in \lang$,
    \[
        \{\langle P(x,w), V^*(\regi{V^*}\rangle\} \approx_c \{\Sim(V^*, \regi{V^*}\}
    \]
\end{definition}

A frequently useful class of argument arises from sigma protocols. Sigma protocols are three round, public-coin arguments with relaxed zero-knowledge properties. In a public-coin protocol, the verifier's messages are truly random.

\begin{definition}[Sigma Protocol]
    A Sigma protocol for an NP language $\lang$ with relation $\mathsf{Rel}_{\lang}$ is a three-message public-coin argument system $\Sigma = (P_1, P_3, \Verify_\Sigma)$ for $\lang$ with the following properties:
    \begin{itemize}
        \item \textbf{Special Soundness.} There exists an extractor $E$ which, given two transcripts $(s_1, s_2, s_3)$ and $(s_1, s_2', s_3')$ for the same statement $x$ with the same first message $s_1$ and different verifier challenges $s_2\neq s_2'$, extracts a witness for $x$.
        \item \textbf{Special Honest-Verifier Zero Knowledge (HVZK).} There exists a QPT simulator $\Sim_\Sigma$ such that for every $(x,w) \in \mathsf{Rel}_{\lang}$ and every second message $s_2$, 
        \[
            \{ (s_1, s_2, s_3): s_1 \gets P_1(x, w), s_3 \gets P_3(x, w, s_2)\} \approx_c \{\Sim_\Sigma(x, s_2)\}
        \]
        We say $\Sigma$ is $\epsilon$-secure if no QPT distinguisher has greater than $\epsilon$ advantage in distinguishing these two distributions.
    \end{itemize}
\end{definition}

\paragraph{Fiat-Shamir Transform.} 
The Fiat-Shamir transform~\cite{C:FiaSha86,C:BelGol92} modifies a sigma protocol to become non-interactive by using a random oracle $H$. 
Specifically, the prover first computes $s_1 \gets P_1(x,w)$, then computes the verifier's challenge $s_2 \gets H(x\concat s_1)$, and finally computes $s_3 \gets P_3(x,w,s_2)$ and outputs $(s_1, s_2, s_3)$. 
To verify a transcript, the verifier checks that $s_2 = H(s_1)$, then verifies the transcript using $\Verify_{\Sigma}$.
Given a sigma protocol $\Sigma$, we denote the Fiat-Shamir transform of $\Sigma$ as $\FS_\Sigma^H(x,w)$. We extend the notation as $\FS_{\Sigma}^H(x,w;r)$ when specifying the prover's randomness $r$ for $\Sigma$.

Although the analysis of Fiat-Shamir is more complicated in the quantum setting, a series of works have shown that, under mild assumptions on the sigma protocol, post-quantum Fiat-Shamir is sound even with quantum access to $H$, culminating in~\cite{C:DFMS19,C:LiuZha19}. 

\begin{theorem}[\cite{C:LiuZha19}]\label{thm:fs-pq-nizk}
    If a post-quantum sigma protocol has (1) perfect completeness, (2) quantum proof of knowledge, and (3) unpredictable first messages, then the Fiat-Shamir heuristic gives a quantum NIZKPoK.
\end{theorem}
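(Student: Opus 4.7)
The plan is to establish each of the three properties of a NIZKPoK in turn: completeness, zero-knowledge, and quantum proof of knowledge.

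Completeness follows immediately from the perfect completeness of the underlying sigma protocol: the honest prover running $\FS_\Sigma^H(x,w)$ produces a transcript of the form $(s_1, H(x\concat s_1), s_3)$ whose final message is generated exactly as $\Sigma$ would have generated it against challenge $H(x\concat s_1)$, so $\Verify_\Sigma$ accepts with probability $1$.

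For zero-knowledge, I would construct the NIZK simulator as follows: on input $x$, it samples a uniformly random challenge $s_2$, invokes the special HVZK simulator $\Sim_\Sigma(x, s_2)$ to obtain a transcript $(s_1, s_2, s_3)$, and reprograms the random oracle so that $H(x\concat s_1) = s_2$ before returning the transcript. Indistinguishability from the real execution reduces to an adaptive-reprogramming argument in the QROM: by the unpredictable first messages property, the first message $s_1$ comes from a distribution of high min-entropy, so the verifier's pre-proof quantum query weight at the point $x\concat s_1$ is negligible. Invoking \cref{lem:QROM-replacement} on the single reprogrammed point shows that reprogramming is undetectable, after which the special HVZK property lets us swap the honestly generated transcript for a simulated one.

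The main obstacle, and the genuinely quantum step, is the proof-of-knowledge part. Given an adversarial prover $P^*$ that produces an accepting proof $(s_1^*, s_2^*, s_3^*)$ for $x$ with noticeable probability using $q$ random-oracle queries, the plan is to apply the measure-and-reprogram extractor of Don--Fehr--Majenz--Schaffner / Liu--Zhandry: pick a uniformly random query index $i \in [q]$, measure $P^*$'s input register at query $i$ to extract a candidate $x\concat s_1^*$, then reprogram $H(x\concat s_1^*)$ to a fresh challenge $c$ and let $P^*$ continue. Running this procedure twice from the post-measurement state with two independent challenges $c \neq c'$ yields, with probability polynomially related to $P^*$'s success, two accepting transcripts $(s_1^*, c, s_3)$ and $(s_1^*, c', s_3')$ sharing the same first message; the special soundness extractor of $\Sigma$ then produces a witness for $x$. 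The hard part is quantifying the extraction loss: one must argue that measuring one of $q$ queries in superposition collapses $P^*$ only by a $\poly(q)$ factor in success probability rather than destroying it, which is exactly the content of the measure-and-reprogram bound and is the place where the quantum analysis deviates substantially from the classical rewinding argument. Combining the three parts then yields the quantum NIZKPoK.
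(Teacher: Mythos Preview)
The paper does not prove this theorem at all: it is stated as a cited result of Liu and Zhandry~\cite{C:LiuZha19} (and the closely related work of Don, Fehr, Majenz, and Schaffner~\cite{C:DFMS19}) and is invoked as a black box in the proofs of \cref{thm:delnizkexists} and \cref{thm:delsigexists}. So there is no ``paper's own proof'' to compare against; your sketch is essentially a summary of the argument in the cited works rather than something the present paper does.

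As a sketch of the Liu--Zhandry / DFMS argument, your outline is broadly on target for completeness and zero-knowledge. For the proof-of-knowledge step, however, there is a mismatch between your strategy and the theorem's hypotheses. You propose to measure-and-reprogram twice with fresh challenges and then apply the \emph{special soundness} extractor to the two resulting transcripts. But hypothesis~(2) of the theorem is that $\Sigma$ already has a \emph{quantum proof of knowledge} extractor, not merely special soundness; the actual argument is to use measure-and-reprogram once to convert the NIZK prover $P^*$ into an interactive $\Sigma$-prover with polynomially-related success probability, and then invoke the assumed quantum PoK extractor of $\Sigma$ on that interactive prover. Your ``run twice and apply special soundness'' route would need an additional quantum rewinding argument to justify that the second run still succeeds after the first measurement, which is precisely the step that fails classically-style and is why one assumes quantum PoK of $\Sigma$ to begin with.
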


Fiat-Shamir can also be used to create a signature scheme from a sigma protocol by using an NP instance $x$ where finding the witness is hard (e.g. the image of a one-way function) as the public key and its witness $w$ as the secret key. A signature on a message $m$ is obtained by computing $\FS_\Sigma^{H(m\concat \cdot)}(x,w)$, where $H(m\concat \cdot)$ denotes $H$ with the first portion of the input fixed to $m$.

\cite{C:DFMS19,C:LiuZha19} also showed that if the underlying sigma protocol is collapsing,\footnote{\cite{C:DFMS19} refers to this property as ``computational unique responses''.} then the Fiat-Shamir transform gives a secure signature scheme in the quantum random oracle model. 

\begin{definition}[Collapsing for Sigma Protocols~\cite{AC:Unruh16,C:DFMS19,C:LiuZha19,FOCS:CMSZ21}]
    We say a protocol $\langle P, V\rangle$ is \emph{collapsing} if for every polynomial-size interactive quantum adversary $P^*$ and polynomial-size quantum distinguisher $\adv$,
    \[
        \left| \Pr[1\gets \mathsf{CollapseExp}(0, P^*, \adv)] 
        - \Pr[1\gets \mathsf{CollapseExp}(1, P^*, \adv)]\right| \leq \negl(\secpar)
    \]
    For $b\in \{0,1\}$, the experiment $\mathsf{CollapseExp}(b, P^*, \adv)$ is defined as follows:
    \begin{enumerate}
        \item The challenger executes $\langle P^*, V\rangle$, storing the result in registers $(\regi{1}, \dots, \regi{n})$. It measures every register $\regi{n}$ in the computational basis.
        \item The challenger coherently evaluates $V(\ket{m_1, \dots, m_n})$ and measures the result. If it is reject, the experiment aborts by outputting a random bit.
        \item If $b=0$, the challenger does nothing. If $b=1$, the challenger measures $\regi{n}$ in the computational basis.
        \item The challenger sends $(\regi{1}, \dots, \regi{n})$ to $\adv$. The experiment outputs $\adv$'s output bit.
    \end{enumerate}
\end{definition}

\begin{theorem}[\cite{C:LiuZha19}]\label{thm:fs-pq-sig}
    If a post-quantum sigma protocol is collapsing, then the Fiat-Shamir heuristic gives a secure post-quantum digital signature scheme in the quantum random oracle model.
\end{theorem}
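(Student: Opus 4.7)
The plan is to reduce EUF-CMA security of the Fiat--Shamir signature to the one-wayness of the function $f$ used to generate the public key $x = f(\sk)$, leveraging special HVZK, special soundness, and the collapsing property of the underlying sigma protocol $\Sigma$. I would start by assuming a QPT forger $\adv$ that, after polynomially many chosen-message queries $m_1,\dots,m_q$, outputs a valid forgery $(m^*, s_1^*, s_2^*, s_3^*)$ for a fresh $m^*$ with noticeable advantage $\epsilon$; the reduction's goal is to output a preimage $w$ with $f(w) = x$.

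The first step is to eliminate the need for $\sk$ when answering signature queries. For each query on $m_i$, the reduction samples a fresh challenge $s_2^i$, runs $\Sim_\Sigma(x, s_2^i)$ to obtain a simulated transcript $(s_1^i, s_2^i, s_3^i)$, and adaptively reprograms the QROM so that $H(m_i \concat s_1^i) = s_2^i$. Indistinguishability from the real signing oracle follows by combining special HVZK with the quantum adaptive reprogramming lemma, whose error is controlled by the (high) min-entropy of honestly generated first messages. After this hybrid the reduction answers all signing queries without $\sk$.

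The second step extracts a witness from the forgery. Since $m^*$ is fresh, $H(m^* \concat s_1^*)$ was never programmed by the reduction, so $\adv$ must have queried this point. Apply the measure-and-reprogram technique of Don--Fehr--Majenz--Schaffner / Liu--Zhandry: at a uniformly chosen time step, measure $\adv$'s query register. With probability $\Omega(\epsilon/q_H^2)$ the outcome is $m^* \concat s_1^*$. Reprogram $H$ at this point to a fresh challenge $s_2$ and complete the execution to obtain an accepting $(s_1^*, s_2, s_3)$; then replay the tail of $\adv$ with an independent challenge $s_2' \neq s_2$ to obtain a second accepting $(s_1^*, s_2', s_3')$. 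The \emph{collapsing} property enters crucially here: coherently verifying the first transcript and measuring $s_3$ does not noticeably disturb the portion of $\adv$'s post-query state needed for the second branch, so the joint two-fold extraction succeeds with noticeable probability. Finally, invoke special soundness on the two transcripts to recover a witness $w$ for $x$, breaking one-wayness of $f$.

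The main obstacle is controlling the compounded losses of the quantum reductions: the $q$ adaptive reprogramming hops for signature queries, the $\Omega(1/q_H^2)$ loss of measure-and-reprogram, and the two-fold completion required by special soundness must all be composed without destroying $\adv$'s relevant coherence. The most delicate piece is the collapsing step, where one must show that the intermediate coherent verification and measurement of $s_3$ on the first branch leaves enough of $\adv$'s internal state intact for an independent second branch on essentially the same adversary state; this is precisely where the collapsing hypothesis (and not merely unpredictable first messages, which already suffice for NIZK soundness) is invoked.
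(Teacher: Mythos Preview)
This theorem is not proven in the paper at all: it is stated with an explicit citation to Liu--Zhandry (the \texttt{\textbackslash cite\{C:LiuZha19\}} in the theorem header) and is used as a black box. The paper simply imports the result from prior work and provides no argument of its own, so there is nothing to compare your proposal against here.

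For what it is worth, your sketch is a reasonable high-level outline of the strategy in the cited literature: simulate signing queries via HVZK plus adaptive reprogramming, then use the measure-and-reprogram machinery to obtain two accepting transcripts on the same first message and invoke special soundness. Your identification of the role of collapsing is also in the right spirit, though the precise mechanism in Liu--Zhandry is somewhat different from the ``replay the tail'' picture you describe; collapsing is used to argue that measuring the third message of the forgery does not disturb the adversary's state, which is what allows the quantum rewinding/extraction to go through. If you want the actual details you should consult \cite{C:LiuZha19} (or \cite{C:DFMS19}) directly, since the present paper does not reproduce them.
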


We note that natural sigma protocols satisfying the requirements of both \cref{thm:fs-pq-nizk,thm:fs-pq-sig} are known. For instance, Unruh~\cite{EC:Unruh12} shows that a slight modification of Blum's Hamiltonian path argument is a quantum proof of knowledge. It also has perfect completeness and unpredictable first messages. Furthermore, if the commitments in the first round are implemented via a random oracle, then it has subexponential HVZK.

\subsection{Revocable Signatures and NIZKs}

\begin{definition}
    A digital signature is a tuple of algorithms $(\Gen, \Sign, \Verify)$ with the following behavior:
    \begin{itemize}
        \item $\underline{\Gen(1^\secpar)}$ takes in the security parameter and outputs a key pair $(\sk, \verkey)$.
        \item $\underline{\Sign(\sk, m)}$ takes in a signing key $\sk$ and a message $m$, then outputs a (potentially quantum) signature $\sigma$.
        \item $\underline{\Verify(\verkey, \sigma, m)}$ takes in a verification key $\verkey$, an alleged signature $\sigma$, and a message $m$, then outputs accept or reject.
    \end{itemize}
    A digital signature must statisfy the following properties:
    \begin{itemize}
        \item \textbf{Correctness.} For every message $m$:
        \[
            \Pr\left[\Accept \gets \Verify(\verkey, \sigma, m):
            \begin{array}{c}
                 (\sk, \verkey)\gets \Gen(1^\secpar)  \\
                 \sigma \gets \Sign(\sk, m)
            \end{array}\right] \geq 1-\negl(\secpar)
        \]
        \item \textbf{Existential Unforgeability under Chosen Message Attack (EUF-CMA).} For all QPT adversaries $\adv$, 
        \[
            \Pr\left[
            \begin{array}{c}
                m \notin M \\
                \land \\
                \Accept \gets \Verify(\verkey, \regi{sig}, m)  
            \end{array}:
                    \begin{array}{cc}
                        (\sk, \verkey)\gets \Gen(1^\secpar) \\
                        \adv^{\Sign(\sk, \cdot)}(\verkey) 
                    \end{array}\right] 
            \leq \negl(\secpar)
        \]
        where $M$ is the list of messages that the adversary queries to the signing oracle $\Sign(\sk, \cdot)$.
    \end{itemize}
\end{definition}

\paragraph{Revocable Signatures.} 

A revocable signature scheme~\cite{TQC:MPY24} augments the signature scheme syntax with two additional algorithms $\Del$ and $\DelVer$. Additionally, $\Sign(\sk, m)$ is modified to output both a signature register $\regi{sig}$ and a deletion verification key $\delkey$. The new algorithms act as follows:
\begin{itemize}
    \item $\underline{\Del(\regi{sig})}$ takes in a register containing a signature, then outputs a certificate register $\regi{cert}$.
    \item $\underline{\DelVer(\delkey, \regi{cert})}$ takes in the deletion verification key and a certificate register, then outputs accept or reject.
\end{itemize}

Additionally, a revocable signature scheme should satisfy deletion correctness and a notion of revocation security. We omit \cite{TQC:MPY24}'s notion of revocation security here. Instead, we define certified deniability for (revocable) signatures in \cref{sec:sig-cden-def}.

\begin{definition}[Deletion Correctness]
    For all messages $m$, 
    \[
        \Pr\left[ \Accept \gets \DelVer(\delkey, \regi{cert}):
                \begin{array}{c}
                     (\sk, \verkey) \gets \Gen(1^\secpar)  \\
                     (\regi{sig}, \delkey) \gets \Sign(\sk, m) \\
                     \regi{cert} \gets \Del(\regi{sig})
                \end{array}\right]
        \geq 1-\negl(\secpar)
    \]
\end{definition}

\paragraph{Revocable NIZKs.} Although these have not be explicitly defined before, they follow similar syntax to revocable signatures, so we include the description in this section. A revocable NIZK is augmented with two additional algorithms $\Del$ and $\DelVer$, which act similarly to their signature counterparts. Additionally, a revocable NIZK must satisfy deletion correctness. We define certified deniability for (revocable) NIZKS in \cref{sec:nizk-cden-def}.


    \fi
    \section{Definitions of Certified Deniability}\label{sec:cden-defs}
\subsection{Signatures}\label{sec:sig-cden-def}
\justin{Our construction actually satisfies the NIZK notion, which should imply this (and be strictly stronger).}

Deniable authentication was initially defined by Dwork, Naor, and Sahai~\cite{STOC:DwoNaoSah98} using the simulation paradigm. Informally, a signature is deniable if it could be \emph{simulated} by using only public information. 

We follow a similar simulation-based paradigm which uses a real experiment $\sigcd_{\adv(\regi{\adv})}(\verkey)$. This experiment is parameterized by a QPT adversary $\adv$ who receives auxiliary input in register $\regi{\adv}$ and a key pair $(\sk, \verkey)$. It is defined as follows.
\begin{enumerate}
    

    \item $\adv$ is initialized with register $\regi{\adv}$ and $\verkey$. The challenger is initialized with $\sk$.
    \item $\adv$ gets access to a signing oracle for $\sk$. The oracle tracks the set of messages $M$ that it has been queried on. 
    \item At any point, $\adv$ may make a single special query for a signature on some $m^* \notin M$. The signing oracle responds with $(\ket{\sigma}, \delkey) \gets \Sign(\sk,m)$, but does not add $m^*$ to $M$. Afterwards, $\adv$ may not query again on $m^*$.
    \item $\adv$ outputs a certificate register $\regi{cert}$ and a register $\regi{\adv}$. 
    \item If $\DelVer(\delkey, \regi{cert}) = \Accept$, then output $(M, \regi{\adv})$. Otherwise, output $(M, \bot)$.
\end{enumerate}
If working in an oracle model where the parties have access to an oracle $H$, then we denote the experiment as $\sigcd^H_{\adv(\regi{\adv})}$. 

\begin{definition}[Certified Deniability for Signatures: Plain Model]\label{def:cden-sigs}
    A revocable signature scheme $(\Gen,\allowbreak \Sign,\allowbreak \Verify,\allowbreak \Del,\allowbreak \DelVer)$ is \emph{certifiably deniable} if for every QPT adversary $\adv$, there exists a QPT simulator $\Sim$ such that for every QPT adversary $\adv$ with poly-size auxiliary input register $\regi{\adv}$,
    \begin{gather*}
       \left\{(\verkey, \sigcd_{\adv(\regi{\adv})}(\sk, \verkey)) : (\sk, \verkey) \gets \Gen(1^\secpar) \right\}
        \\
        \approx_c
        \\
        \left\{(\verkey, M_\Sim, \Sim^{\Sign(\sk, \cdot)}(\adv, \regi{\adv}, \verkey)) : (\sk, \verkey) \gets \Gen(1^\secpar) \right\}
    \end{gather*}
    where $M_\Sim$ is the set of messages on which $\Sim$ queries $\Sign(\sk, \cdot)$.
\end{definition}

The set $M$ acts as a way to restrict $\Sim$ from querying the signing oracle on $m^*$, which is only announced after $\adv$ sees $\verkey$. $\adv$ gets a free signing query on $m^*$ that is not recorded, but all of $\Sim$'s queries are recorded.

The above definition requires that $\adv$ declare whether it will delete a signature on $m^*$ when it queries for a signature on $m^*$. A more general notion might allow the adversary to see many signatures before deciding which one to delete. Our definition can be upgraded to satisfy the more general notion using standard complexity leveraging techniques.

\paragraph{Certified Deniability in the QROM.} We may also define certified deniability in the CRS model or the QROM model. Following~\cite{C:Pass03}'s definition from the classical setting, a deniable simulator does \emph{not} have the ability to program the global random oracle; this is enforced by sampling a fresh random oracle before the experiment and including its description in the output of the experiment. Thus, any simulator which attempts to pretend that the oracle had different behavior is will be caught.

\begin{definition}[Certified Deniability for Signatures: QROM]\label{def:cden-sig-qrom}
    A revocable signature scheme $(\Gen,\allowbreak \Sign,\allowbreak \Verify,\allowbreak \Del,\allowbreak \DelVer)$ is \emph{certifiably deniable in the quantum random oracle model} if there exists a QPT simulator $\Sim$ such that for every QPT adversary $\adv$ with poly-size auxiliary input register $\regi{\adv}$,
    \begin{gather*}
        \left\{(O_H, \verkey, \sigcd^H_{\adv(\regi{\adv})}(\sk, \verkey)) : \begin{array}{c}
             H\gets \mathsf{Func}(\cX, \cY)  \\
            (\sk, \verkey) \gets \Gen^H(1^\secpar)
        \end{array} \right\}
        \\
        \approx_c
        \\
        \left\{(O_H, \verkey, M_\Sim, \Sim^{H, \Sign^H(\sk, \cdot)}(\adv, \regi{\adv}, \verkey)) : \begin{array}{c}
             H\gets \mathsf{Func}(\cX, \cY)  \\
            (\sk, \verkey) \gets \Gen^H(1^\secpar)
        \end{array} \right\}
    \end{gather*}
    where $O_H$ denotes oracle access to $H$ and $M_\Sim$ is the set of messages on which $\Sim$ queries $\Sign^H(\sk, \cdot)$.
\end{definition}

We show in \cref{sec:fs-cden} how to obtain signatures with certified deniability by adding certified deniability to the Fiat-Shamir transformation.

\subsection{NIZKs}\label{sec:nizk-cden-def}

\paragraph{Certified Deniability: Real Experiment.} Certified deniability follows the standard real/ideal paradigm of simulator-based definitions. The real experiment $\nizkcd_{\adv(\regi{\adv})}(s, w)$ is parameterized by an adversarial QPT algorithm with auxiliary input $\regi{\adv}$ and some NP statement and witness pair $(s, w)$.  $\nizkcd_{\adv(\regi{\adv})}(s, w)$ consists of the following distribution:
\begin{enumerate}
    \item Sample $(\delkey, \ket{\pi})\gets \Prove(s, w)$ and send $\ket{\pi}$ to the adversary.
    \item The adversary outputs two registers $(\regi{\cert}, \regi{\adv})\gets \adv(\regi{\adv}, \ket{\pi})$.
    \item If $\DelVer(\delkey, \regi{\cert})$ outputs accept, then the experiment outputs the adversary's residual state register $\regi{\adv}$. Otherwise, it outputs $\bot$.
\end{enumerate}
If working in an oracle model where the parties have access to an oracle $H$, then we denote the experiment as $\nizkcd^H_{\adv(\regi{\adv})}(s, w)$.

\begin{definition}[Certified Deniability for NIZKs: Plain Model]\label{def:nizk-cden-plain}
    A non-interactive argument system $(\Prove, \Verify, \DelVer)$ is \emph{certifiably deniable} if there exists a QPT simulator $\Sim$ such that for every QPT adversary $\adv$ with poly-size auxiliary input register $\regi{\adv}$ and every NP statement/witness pair $(s,w)$,
    \[
       \left \{\nizkcd_{\adv(\regi{\adv})}(s, w) \right\} 
       \approx_c 
       \left\{\Sim(s, \adv, \regi{\adv}) \right\}
    \]
\end{definition}

We may also define certified deniability in the CRS model or the QROM model. Following~\cite{C:Pass03}'s definition from the classical setting, a deniable simulator does \emph{not} have the ability to program the global random oracle; this is enforced by sampling a fresh random oracle before the experiment and including its description in the output of the experiment. Thus, any simulator which attempts to pretend that the oracle had different behavior is likely to be caught.

\begin{definition}[Certified Deniability for NIZKs: QROM]\label{def:nizk-cden-qrom}
    A non-interactive argument system $(\Prove, \Verify, \DelVer)$ is \emph{certifiably deniable in the quantum random oracle model} if there exists a QPT simulator $\Sim$ such that for every QPT adversary $\adv$ with quantum auxiliary input $\aux$ and every NP statement/witness pair $(s,w)$,
    \begin{gather*}
        \left\{ \left(\nizkcd^H_{\adv(\regi{\adv})}(s, w), O_H \right) : H\gets \mathsf{Func}(\cX, \cY) \right\} 
        \\
        \approx_c 
        \\
        \left\{\left(\Sim^H(s, \Adv, \regi{\adv}), O_H \right) : H\gets \mathsf{Func}(\cX, \cY) \right\}
    \end{gather*}
    where $O_H$ denotes oracle access to $H$.
\end{definition}

We show in \cref{sec:fs-cden} how to obtain NIZKs with certified deniability by adding certified deniability to the Fiat-Shamir transformation.

    \section{Fiat-Shamir with Certified Deniability}\label{sec:fs-cden}

In this section, we show how to modify the Fiat-Shamir transform to add certified deniability. As a result of this general paradigm, we obtain both signatures with certified deniability (see \cref{sec:sig-cden-def}) and non-interactive zero knowledge with certified deniability (see \cref{sec:nizk-cden-def}).

Let $\Sigma$ be a Sigma protocol, and denote $\FS_{\Sigma}^H$ as its Fiat-Shamir transform with oracle access to $H$. We denote an oracle with the first part of the input fixed to $v$ as $H(v\concat \cdot)$. On query $w$, it returns $H(v\concat w)$.

\begin{construction}[$\fscd_{\Sigma}$]\label{constr:fs-cden}
    The construction is as follows.
    \begin{itemize}
        \item \underline{$\Prove(x, w)$:}
        \begin{enumerate}
            \item Sample a subspace $A\subset \{0,1\}^\secpar$ of dimension $\secpar/2$ and an offset $s\gets \{0,1\}^\secpar$, then prepare the coset state $\ket{A_{0,s}} \propto \sum_{a\in A}(-1)^{a\cdot s}\ket{a}$ in register $\regi{A}$. Initialize register $\regi{\Sigma} = (\regi{\Sigma,1}, \regi{\Sigma,2}, \regi{\Sigma,3})$ to $\ket{0}$.

            \item Sample a key $k\gets \{0,1\}^\secpar$ and apply the isometry 
            \[
                \ket{a}_{\regi{A}} \otimes \ket{\vec{0}}_{\regi{\Sigma}}
                \mapsto
                \ket{a}_{\regi{A}} \otimes \ket{\FS^{H(a\concat\cdot)}_{\Sigma}(x,w; H(k\concat a))}_{\regi{\Sigma}}
            \]
            to registers $\regi{A}$ and $\regi{\Sigma}$. This results in a state
            \[
                \ket{\pi} 
                :\propto 
                \sum_{a\in A} (-1)^{s\cdot a} \ket{a}_{\regi{A}} \otimes \ket{s_1^a, s_2^a, s_3^a}_{\regi{\Sigma}}
            \]
            where $s_2^a = H(a\concat s_1^a)$ and $(s_1^a, s_3^a)$ are the prover's messages from $\Sigma$ using randomness $H(k\concat a)$.
            \item Output $\ket{\pi}$ as the argument and $\delkey \coloneqq (A, k, w)$ as the deletion key.

        \end{enumerate}
        
        \item \underline{$\Verify(x, \regi{arg})$:} 
        \begin{enumerate}
            \item Parse $\regi{arg} = (\regi{A}, \regi{\Sigma})$.
            \item Coherently evaluate $\FS_\Sigma.\Verify$ on register $\regi{\Sigma}$, then measure and output the result.
        \end{enumerate}
        
        \item \underline{$\Del(\regi{arg})$:} Output $\regi{arg}$.
        
        \item \underline{$\DelVer(\delkey, \regi{cert})$:} 
        \begin{enumerate}
            \item Parse $\delkey = (A, k, w)$. Parse $\regi{arg} = (\regi{A}, \regi{\Sigma})$.

            \item Apply the isometry 
             \[
                \ket{a}_{\regi{A}} \otimes \ket{y}_{\regi{\Sigma}}
                \mapsto
                \ket{a}_{\regi{A}} \otimes \ket{y \oplus \FS^{H(a\concat \cdot)}_{\Sigma}(x,w; H(k\concat a))}_{\regi{\Sigma}}
            \]

            \item Measure register $\regi{A}$ with respect to the PVM $\{\ketbra{A_{0,s}}, I - \ketbra{A}_{0,s}\}$. Output accept if the result is the former, and reject if it is the latter.
        \end{enumerate}
    \end{itemize}

\end{construction}

\begin{theorem}\label{thm:delnizkexists}
    If $\Sigma$ is a $2^{-\secpar}$-secure sigma protocol for a language $\lang$ with (1) perfect completeness and (2) quantum proof of knowledge, then \Cref{constr:fs-cden} is a NIZKPoK for $\lang$ with certified deniability in the QROM.
\end{theorem}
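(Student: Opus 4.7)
The plan is to establish three properties of \Cref{constr:fs-cden}: perfect completeness, quantum proof of knowledge, and certified deniability in the QROM. Completeness follows directly from the perfect completeness of $\Sigma$ and the fact that each branch $\ket{a}\otimes\ket{s_1^a, s_2^a, s_3^a}$ of the superposition is an accepting Fiat-Shamir transcript, so the coherent verification procedure accepts with certainty on every branch and does not disturb $\regi{arg}$.

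For proof of knowledge, my plan is to reduce to the classical-argument proof of knowledge of Fiat-Shamir (\Cref{thm:fs-pq-nizk}). The key observation is that a malicious prover does not get the offset $s$, so from its viewpoint the honest verifier's acceptance test is equivalent to first twirling the $\regi{A}$ register over random $s$. By \Cref{lem:general-ztwirl}, averaging $\ketbra{A_{0,s}}$ over $s\in\{0,1\}^\secpar$ projects onto the computational basis on $A$, meaning that for soundness analysis we may treat $\regi{A}$ as having been measured to some $a^*\in A$; what remains in $\regi{\Sigma}$ is a classical Fiat-Shamir transcript relative to the oracle $H(a^*\concat\cdot)$. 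This reduces the soundness/PoK analysis to the classical one, invoking \Cref{thm:fs-pq-nizk} (a mild care about unpredictable first messages is inherited from $\Sigma$).

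For certified deniability, the simulator $\Sim$ samples $(A, s, k)$ itself, picks fresh randomness and a simulated challenge $s_2^a$ for every $a\in A$, runs the HVZK simulator $\Sim_\Sigma(x, s_2^a)$ to obtain $(s_1^a, s_3^a)$, and locally reprograms its simulated oracle to satisfy $H'(a\concat s_1^a) = s_2^a$. It then builds the coherent superposition over these simulated transcripts exactly as in $\Prove$, feeds it to $\adv$, and performs $\DelVer$ honestly. Indistinguishability is shown through a hybrid sequence: first replace the coherent PRG $H(k\concat a)$ with a fresh random function using \Cref{lem:qrom-prf}, then iteratively replace each real transcript $(s_1^a, s_2^a, s_3^a)$ by a simulated one, indexed over $a \in A\setminus\{0\}$. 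Each step loses at most $2^{-\secpar}$ by the $2^{-\secpar}$-security of HVZK, for a total loss of at most $|A|\cdot 2^{-\secpar} \le 2^{-\secpar/2}$, which is negligible.

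The main obstacle, and the heart of the proof, is arguing that the local reprogramming of $H$ on the set $\{a\concat s_1^a : a\in A\setminus\{0\}\}$ is undetectable by the external distinguisher, which has access to the \emph{real} un-reprogrammed $H$. By \Cref{lem:QROM-replacement}, any distinguisher with advantage $\epsilon$ must place amplitude at least $\epsilon^2/\poly$ on some $a\concat s_1^a$ across the joint adversary-plus-distinguisher execution, which (after measuring the query register at a random timestep) yields a nonzero $a\in A$. On the other hand, conditioned on $\DelVer$ accepting, $\regi{A}$ after the unsigning isometry has collapsed onto $\ket{A_{0,s}}$; measuring it in the Hadamard basis produces a uniformly random element of $A^\perp + s$, from which we can subtract the simulator's known $s$ to get a nonzero vector in $A^\perp$. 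Since the membership oracles $O_A, O_{A^\perp}$ needed by the reduction can be simulated by the challenger who knows $A$, this joint production of a vector in $A$ and a vector in $A^\perp$ contradicts the coset direct product hardness of \Cref{lem:subspace-dph}, forcing $\epsilon$ to be negligible. Putting this together with the hybrid above yields certified deniability.
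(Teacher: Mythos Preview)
Your overall strategy matches the paper's: reduce soundness/PoK and zero-knowledge to the classical Fiat-Shamir result via a ``collapse $\regi{A}$ to a classical value'' argument, and prove certified deniability through exactly the hybrid sequence you describe (replace $H(k\concat\cdot)$ by fresh randomness, swap each branch's transcript for a simulated one via $2^{-\secpar}$-HVZK over the $2^{\secpar/2}$ elements of $A$, then argue the reprogramming is undetectable via direct product hardness together with \Cref{lem:QROM-replacement}). The deniability portion of your plan is essentially the paper's proof.

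There is, however, a genuine confusion in your soundness/PoK paragraph. You write that ``a malicious prover does not get the offset $s$'' and then invoke the Z-twirl \Cref{lem:general-ztwirl} to conclude that $\regi{A}$ may be treated as measured. This is backwards: in the soundness game the \emph{prover} creates the argument state, so there is no externally hidden $s$ for it not to get, and there is no reason the measured value should lie in $A$. The Z-twirl argument is the right tool for \emph{zero-knowledge} (the honest prover's random $s$ is hidden from a malicious verifier, so the verifier's view equals the computational-basis mixture). For soundness the correct observation is simpler: $\Verify$ just coherently evaluates a classical predicate on $(\regi{A},\regi{\Sigma})$, so it commutes with computational-basis measurement of both registers; hence a malicious prover can without loss of generality send a classical $(a,s_1,s_2,s_3)$, and you are reduced to the classical Fiat-Shamir PoK of \Cref{thm:fs-pq-nizk} with $a$ folded into the first message (this is also where unpredictable first messages comes for free). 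Relatedly, your plan lists completeness, PoK, and certified deniability but omits zero-knowledge, which is part of ``NIZKPoK''; the Z-twirl argument, applied to the verifier's side as above, is exactly what supplies it.
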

\paragraph{Remark:} As mentioned in \cref{sec:prelims-args}, such sigma protocols exist unconditionally in the QROM.
\begin{proof}
    We prove certified deniability in \cref{claim:NIZK-cden}. Correctness of deletion follows from inspection and the fact that $\ket{A_{0,s}\backslash \{0\}}$ is negligibly close in trace distance to $\ket{\ket{A_{0,s}}}$.

    By \cref{lem:general-ztwirl}, any NIZK $\ket{\pi}$ is perfectly identical, from the verifier's point of view, from whether the first register $\regi{A}$ has been measured in the computational basis. Thus, \cref{constr:fs-cden} is sound and zero-knowledge if sampling a random $a$ and outputting $a \concat \FS^{H(a\concat \cdot)}(x,w)$ is sound and zero-knowledge. Observe that $a$ can be considered to be part of the first message of underlying sigma protocol, which makes the sigma protocol have unpredictable first messages. Then \cref{thm:fs-pq-nizk} implies that $a \concat \FS^{H(a\concat \cdot)}(x,w)$ is sound and zero-knowledge.
\ifsubmission\qed\else\fi\end{proof}

\cref{constr:fs-cden} is already a NIZK, if instantiated with an appropriate sigma protocol. To obtain a signature scheme, a few more details are needed.

\begin{construction}[Signature with Certified Deniability]\label{constr:sig-sim}
    Let $f:\{0,1\}^\secpar \mapsto \{0,1\}^{\poly(\secpar)}$ be a one-way function and let $\Sigma$ be a sigma protocol for the language \[
    \lang_f = \left\{x\in \{0,1\}^{\poly(\secpar)}: \exists w\in \{0,1\}^\secpar \text{ s.t. } f(w) = x\right\}
    \]
    \begin{itemize}
        \item $\underline{\Gen(1^\secpar):}$ Sample $w \gets \{0,1\}^\secpar$. Output $\verkey = f(w)$ and $\sk = w$.
        \item $\underline{\Sign(\sk, m):}$ Evaluate $(\ket{\pi}, \delkey) \gets \fscd_{\Sigma}.\Prove^{H(m\concat \cdot)}(\verkey, \sk)$ and output the result.
        \item $\underline{\Verify(\verkey, m, \regi{arg}):}$ Evaluate $\fscd_{\Sigma}.\Verify^{H(m\concat \cdot)}(\verkey, \regi{arg})$.
        \item $\underline{\Del(\regi{arg}):}$ Output $\fscd_{\Sigma}.\Del^{H(m\concat \cdot)}(\regi{arg})$.
        \item \underline{$\DelVer(\delkey, \regi{cert})$:} Output $\fscd_{\Sigma}.\DelVer^{H(m\concat \cdot)}(\delkey, \regi{cert})$.
    \end{itemize}
\end{construction}

\begin{theorem}\label{thm:delsigexists}
    If $f$ is a one-way function and $\Sigma$ is a collapsing, $2^{-\secpar}$-secure sigma protocol for $\lang_f$, then \cref{constr:sig-sim} is a signature scheme with certified deniability in the QROM.
\end{theorem}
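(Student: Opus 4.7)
The plan is to verify three properties: (i) signing and deletion correctness, (ii) EUF-CMA unforgeability, and (iii) certified deniability as in \cref{def:cden-sig-qrom}. Correctness of signing follows from perfect completeness of $\Sigma$ together with inspection, and deletion correctness is inherited from the NIZK argument in the proof of \cref{thm:delnizkexists}: after $\DelVer$ uncomputes the Fiat-Shamir transcript using $(A, k, w) = (A, k, \sk)$, the register $\regi{A}$ is negligibly close to $\ket{A_{0,s}}$, so the final projection accepts with overwhelming probability.

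For EUF-CMA, I will reduce to the signature form of Fiat-Shamir via \cref{lem:general-ztwirl}. Since the forger only sees the signature and never the signer's private registers, the first register $\regi{A}$ of $\ket{\pi}$ behaves, from its point of view, as if it had been measured in the computational basis, collapsing to a random $a \in A$ together with a single classical transcript $(s_1^a, s_2^a, s_3^a)$. After this z-twirl step, each signature is literally a Fiat-Shamir signature for the sigma protocol whose first message additionally contains $a$; this augmented protocol remains collapsing and has the same proof-of-knowledge structure as $\Sigma$, so \cref{thm:fs-pq-sig} gives unforgeability in the QROM.

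The bulk of the proof is certified deniability. The simulator $\Sim$ answers every query $m \neq m^*$ by forwarding to $\Sign^H(\sk, \cdot)$, and handles the single special query on $m^*$ by itself: it samples $(A, s, k)$ as the honest prover does, but for each $a \in A$ it runs the HVZK simulator of $\Sigma$ on input $\verkey$ to obtain a transcript $(s_1^a, s_2^a, s_3^a)$ and locally patches $H(m^* \concat a \concat s_1^a) \coloneqq s_2^a$, then returns the coset superposition together with the corresponding deletion key. Indistinguishability from $\sigcd^H_{\adv(\regi{\adv})}$ proceeds by a hybrid argument that replaces honest transcripts by HVZK-simulated ones one value of $a$ at a time; each swap costs at most $2^{-\secpar}$ by the $2^{-\secpar}$-HVZK security of $\Sigma$, and because $|A| = 2^{\secpar/2}$ a union bound over all $a$ yields total loss $2^{-\secpar/2}$, i.e. negligible. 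This is the step that requires the subexponential hypothesis on $\Sigma$: a single classical swap is polynomial-time detectable, but the adversary is QPT and so cannot break each step within its hybrid.

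The main obstacle, and the heart of the quantum argument, is showing that the external distinguisher cannot detect the local reprogramming of $H$ at the points $\{m^* \concat a \concat s_1^a : a \in A\}$. The key observation is that whenever $\DelVer$ accepts, the certificate register is, up to negligible trace distance, in the state $\ket{A_{0,s}}$, whose Hadamard measurement yields a nonzero vector of $A^\perp + s$. The direct product hardness of \cref{lem:subspace-dph} then forbids the joint residual state of $\adv$ together with the distinguisher from also containing any nonzero $a \in A$ with noticeable weight, even given oracle access to $O_A$ and $O_{A^\perp}$ (which subsumes the ability to test whether a classical query lies in $A$). Combining this with the BBBV-style \cref{lem:QROM-replacement}, any noticeable distinguishing advantage between the true oracle and its local reprogramming would, by measuring a random query position of the distinguisher, produce a point $m^* \concat a \concat s_1^a$ with $a \in A \setminus \{0\}$, contradicting direct product hardness. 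This closes the hybrid chain and establishes certified deniability.
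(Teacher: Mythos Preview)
Your proposal is correct and follows essentially the same route as the paper: unforgeability is reduced to classical Fiat-Shamir via the z-twirl (\cref{lem:general-ztwirl}) and \cref{thm:fs-pq-sig}; deniability is argued by a per-$a$ HVZK hybrid over the $2^{\secpar/2}$ elements of $A$ with loss $2^{-\secpar}$ each; and the final step uses direct-product hardness (\cref{lem:subspace-dph}) together with \cref{lem:QROM-replacement} to show that, conditioned on certificate acceptance, the distinguisher cannot locate any reprogrammed point $a\in A\setminus\{0\}$. The paper's hybrid sequence is ordered slightly differently---it first reprograms the oracle (via \cref{lem:qrom-prf}) so that $s_2^a$ is independent of $s_1^a$, then swaps to simulated transcripts, then removes the reprogramming from the distinguisher's view---but the substance matches yours.
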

\begin{proof}
    We prove certified deniability in \cref{claim:NIZK-cden}. Correctness of deletion follows from inspection and the fact that $\ket{A_{0,s}\backslash \{0\}}$ is negligibly close in trace distance to $\ket{\ket{A_{0,s}}}$.
    
     By \cref{lem:general-ztwirl}, any NIZK $\ket{\pi}$ is perfectly identical, from the verifier's point of view, from whether the first register $\regi{A}$ has been measured in the computational basis. Thus, \cref{constr:fs-cden} is existentially unforgeable if sampling a random $a$ and outputting $a \concat \FS_{\Sigma}^{H(m\concat a\concat \cdot)}(x,w)$ is existentially unforgeable. Observe that $a$ can be considered to be part of the first message of the underlying sigma protocol without affecting its properties. Then \cref{thm:fs-pq-sig} implies that $a \concat \FS_{\Sigma}^{H(m\concat a\concat \cdot)}(x,w)$ is existentially unforgeable.
\ifsubmission\qed\else\fi\end{proof}

\subsection{Proof of Certified Deniability}

\begin{claim}\label{claim:NIZK-cden}
      If $\mathsf{Sigma}$ is a $2^{-\secpar}$-secure sigma-protocol for a language $\lang$, then \Cref{constr:fs-cden} satisfies certified deniability for NIZKs in the QROM (\cref{def:nizk-cden-qrom}). Furthermore, assuming the same, $\Cref{constr:sig-sim}$ satisfies certified deletion for signatures in the QROM (\cref{def:cden-sig-qrom}).
\end{claim}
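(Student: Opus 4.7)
The plan is to construct a simulator $\Sim$ and show via a hybrid argument that its output is indistinguishable from the real experiment, with the crux being a ``forgetful local programming'' step that leverages the direct product hardness of subspace states.

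The simulator $\Sim^H(x, \adv, \regi{\adv})$ proceeds as follows. It samples a random $\secpar/2$-dimensional subspace $A \subset \{0,1\}^\secpar$ and offset $s$, constructs (a negligibly close approximation to) the coset state $\ket{A_{0,s}}$ in register $\regi{A}$, and initializes $\regi{\Sigma}$ to $\ket{\vec 0}$. For each $a \in A \setminus \{0\}$, it samples a fresh challenge $s_2^a$ uniformly at random and runs the honest-verifier simulator $\Sim_\Sigma(x, s_2^a)$ to obtain $(s_1^a, s_3^a)$. It then applies the isometry $\ket{a} \otimes \ket{\vec 0} \mapsto \ket{a} \otimes \ket{s_1^a, s_2^a, s_3^a}$ to prepare $(\regi{A},\regi{\Sigma})$ and hands the resulting state to $\adv$, but runs $\adv$ relative to a locally reprogrammed oracle $H'$ which agrees with $H$ everywhere except that $H'(a \concat x \concat s_1^a) \coloneqq s_2^a$ for each $a \in A \setminus \{0\}$. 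This reprogramming is implemented lazily, by intercepting $\adv$'s QROM queries against a classical list. When $\adv$ outputs $(\regi{\cert}, \regi{\adv})$, $\Sim$ simulates $\DelVer$ by directly projecting $\regi{\cert}$ onto $\ketbra{A_{0,s}}$ and outputs $(\emptyset, \regi{\adv})$ or $(\emptyset, \bot)$ accordingly. The signature case is handled analogously with $H(m \concat a \concat \cdot)$ in place of $H(a \concat \cdot)$; the simulator never calls its own signing oracle, so $M_\Sim = \emptyset$.

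The hybrid argument has three stages. Stage 1 replaces the sigma protocol's internal randomness $H(k\concat a)$ with fresh uniform randomness, indistinguishable via \cref{lem:qrom-prf}. Stage 2 iterates over $a \in A \setminus \{0\}$ and, one $a$ at a time, swaps the honestly computed transcript $(s_1^a, s_2^a, s_3^a)$ (still using $w$ for NIZK, or $\sk$ for signatures) for an HVZK-simulated one, while simultaneously reprogramming $H$ at $a\concat x \concat s_1^a$ to the new $s_2^a$; each swap is $2^{-\secpar}$-indistinguishable by special HVZK, and the total $2^{\secpar/2}$-wise loss is still negligible by complexity leveraging. After Stage 2, the experiment no longer depends on $w$ or $\sk$, and matches $\Sim$ run relative to $H'$. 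Stage 3 then replaces the view of $H'$ the distinguisher sees internally within $\Sim$ with a view of $H$, yielding exactly the simulator's distribution relative to the external oracle $H$.

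The main obstacle is Stage 3. Suppose a $T$-query distinguisher $D$ distinguishes Stage 2 from Stage 3 with advantage $\epsilon$. Applying \cref{lem:QROM-replacement} to $D$'s view and measuring a random query timestep yields a classical string $x^*$ which, with probability $\Omega(\epsilon^2/T^2)$, lies in the reprogrammed set and hence has the form $a^* \concat x \concat s_1^{a^*}$ for some $a^* \in A \setminus \{0\}$. Meanwhile, on the experiment side, whenever the output is not $\bot$, the certificate register has been projected onto $\ketbra{A_{0,s}}$; applying $H^{\otimes \secpar}$ and measuring (using \cref{lem:subspace-proj}) yields an element of $s + A^\perp$, which after subtracting the known $s$ is a uniformly random element of $A^\perp$, nonzero except with probability $1/|A^\perp|$. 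Packaging these together into a reduction to \cref{lem:subspace-dph}: on input $\ket{A}$ with oracles $O_A, O_{A^\perp}$, the reduction samples $s$ itself and applies the phase $(-1)^{s\cdot a}$ coherently to obtain $\ket{A_{0,s}}$, simulates the local reprogramming via the lazy list, implements the $\ketbra{A_{0,s}}$ projector for certificate verification using \cref{lem:subspace-proj} with its $O_A, O_{A^\perp}$ queries, and runs the combined $\Sim$-$\adv$-$D$ machinery. The resulting extractor produces a pair $(a^*, v^*) \in (A\setminus\{0\}) \times (A^\perp\setminus\{0\})$ with probability $\Omega(\epsilon^2/\poly(\secpar))$, contradicting \cref{lem:subspace-dph} unless $\epsilon$ is negligible. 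The most delicate point is that the BBBV-style extraction on $D$ and the certificate-acceptance event on $\adv$ must be argued to jointly succeed; we handle this by first conditioning on the ``certificate accepted'' branch (which by assumption on non-$\bot$ distinguishing occurs with noticeable probability) and then showing that the reprogrammed-query amplitude inside this branch is preserved up to a polynomial factor, so that the two extractions can be composed on the same run.
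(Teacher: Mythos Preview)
Your overall architecture matches the paper's: a three-stage hybrid (derandomize via \cref{lem:qrom-prf}, swap to HVZK-simulated transcripts one $a$ at a time using $2^{-\secpar}$-security, then undo the local reprogramming via direct product hardness and \cref{lem:QROM-replacement}). However, there are two concrete gaps in your simulator description that the paper handles and you do not.

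\textbf{Efficiency.} Your simulator ``for each $a\in A\setminus\{0\}$ samples a fresh challenge $s_2^a$ uniformly at random and runs $\Sim_\Sigma(x,s_2^a)$,'' then reprograms ``lazily against a classical list.'' But $|A\setminus\{0\}|=2^{\secpar/2}-1$, so this is not a QPT simulator: you cannot sample, store, or look up an exponential-size table of $(s_2^a,s_1^a)$ pairs. The paper avoids this by deriving both the challenge and the HVZK randomness \emph{coherently from the oracle}, setting $\widetilde{s_2^a}=H(k_\ch\concat a)$ and running $\Sim_\Sigma(x,\widetilde{s_2^a};H(k\concat a))$. Then the reprogrammed oracle $H'$ can be evaluated on any basis query by (i) testing $q_1\in A$ using the basis for $A$, (ii) recomputing $\widetilde{s_1^{q_1}}$ on the fly, and (iii) comparing. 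Your ``Stage 1'' hybrid even replaces $H(k\concat a)$ by fresh randomness, which is fine \emph{inside an indistinguishability argument} but cannot be how the actual simulator is specified.

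\textbf{Certificate check.} You have $\Sim$ ``directly project $\regi{\cert}$ onto $\ketbra{A_{0,s}}$.'' But the honest certificate is the full state $\sum_a(-1)^{s\cdot a}\ket{a}_{\regi{A}}\otimes\ket{\widetilde{s_1^a},\widetilde{s_2^a},\widetilde{s_3^a}}_{\regi{\Sigma}}$; its $\regi{A}$ marginal is fully mixed over $A$ because $\regi{\Sigma}$ records $a$. Projecting $\regi{A}$ onto $\ket{A_{0,s}}$ without first uncomputing $\regi{\Sigma}$ succeeds only with probability $\approx 2^{-\secpar/2}$, so honest deleters would almost always see $\bot$ in your ideal world, breaking indistinguishability immediately. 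The paper's simulator (its step 5) first applies the isometry $\ket{a}\otimes\ket{y}\mapsto\ket{a}\otimes\ket{y\oplus\Sim_\Sigma(x,H(k_\ch\concat a);H(k\concat a))}$ to disentangle $\regi{\Sigma}$, and only then projects $\regi{A}$.

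A smaller point: in Stage~3 you condition on the certificate-accepted branch and argue extraction there. The paper also separately disposes of the case where the distinguisher gains advantage on the $\bot$ branch (where the experiment output carries no information about $A$, so a BBBV extraction on $D$'s oracle queries alone yields an $A$-element while the untouched $\ket{A}$ gives an $A^\perp$-element). Your parenthetical ``by assumption on non-$\bot$ distinguishing'' is not justified as stated; the distinguisher still has $O_{H'}$ versus $O_H$ even when it receives $\bot$.
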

\begin{proof}
    The proofs of the two sub-claims are almost identical, except for a slight difference in the simulators. The NIZK simulator $\Sim$ uses statement $x$ and is given below. The signature simulator uses the verification key $\verkey$ instead of $x$, and additionally forwards any adversarial queries for signatures to the signing oracle, except if the adversary would query for $m^*$. Furthermore, in the case of signatures, we regard $H(m\concat \cdot)$ to be the random oracle, so we omit explicitly stating $H(m\concat \dots)$ below.

    On input a statement $x$, the adversary's code $\adv$ and auxiliary input register $\regi{\adv}$, the simulator $\Sim$ works as follows:
    \begin{enumerate}
        \item Sample a subspace $A\subset \{0,1\}^\secpar$ of dimension $\secpar/2$ and sample an offset $s\gets \{0,1\}^\secpar$,then prepare the coset state $\ket{A_{0,s}\backslash\{0\}} \propto \sum_{a\in A\backslash\{0\}}(-1)^{a\cdot s}\ket{a}$ in register $\regi{A}$. Initialize register $\regi{\Sigma} = (\regi{\Sigma,1}, \regi{\Sigma,2}, \regi{\Sigma,3})$ to $\ket{0}$.

        \item Apply the isometry
        \[
            \ket{a}_{\regi{A}} \otimes \ket{0}_{\regi{\Sigma}}
            \mapsto
            \ket{a}_{\regi{A}} \otimes \ket{\Sim_{\Sigma}(x, H(k_\ch \concat a); H(k\concat a)}_{\regi{\Sigma}}
        \]
        to registers $\regi{A}$ and $\regi{\Sigma}$.
        \[
            \ket{\widetilde{\pi}}
            :\propto
            \sum_{a\in A\backslash\{0\}} (-1)^{a\cdot s}\ket{a}_{\regi{A}} \otimes \ket{\widetilde{s_1^a}, \widetilde{s_2^a}, \widetilde{s_3^a}}_{\regi{\Sigma}}
        \]
        where $\widetilde{s_2^a} = H(k_\ch \concat a)$ and $(\widetilde{s_1^a}, \widetilde{s_3^a})$ are obtained by running the honest verifier zero knowledge simulator $\Sim_\Sigma$ for $\Sigma$.
        
        \item Sample a key $k_{\ch} \gets \{0,1\}^\secpar$ and define the random oracle $H'$ as follows:
        \[
            H'(q_1\concat q_2\concat q_3) \coloneqq 
            \begin{cases}
                H(k_{\ch}\concat q_2) & \text{if } q_1 \in A \text{ and } q_2\concat q_3 = x \concat \widetilde{s_1^a}
                \\
                H(q_1\concat q_2 \concat q_3) & \text{else.}
            \end{cases}
        \]
        where $\left(\widetilde{s_1^a}, \widetilde{s_2^a}, \widetilde{s_3^a}\right) = \Sim_\Sigma(x, H(k_{\ch}\concat q_2); H(k\concat a))$ for all $a\in A\backslash\{0\}$.
        Note that this may be efficiently evaluated on any computational basis queries using the description of $A$, and thus on any quantum queries in general.

        \item Compute the adversary's output $(\regi{cert}, \regi{\adv}) \gets \adv^{H'}(\regi{\adv}, \ket{\widetilde{\pi}})$.

        \item Sample a key $k\gets \{0,1\}^\secpar$. Parse $\regi{cert} = (\regi{A}, \regi{\Sigma})$ and compute the isometry 
        \[
            \ket{a}_{\regi{A}} \otimes \ket{y}_{\regi{\Sigma}}
            \mapsto
            \ket{a}_{\regi{A}} \otimes \ket{y \oplus \Sim_\Sigma(x, H(k_{\ch}\concat q_2); H(k\concat a))}_{\regi{\Sigma}}
        \]
        on registers $\regi{A}$ and $\regi{\Sigma}$.

        \item Measure register $\regi{A}$ with respect to the PVM $\{\ketbra{A_{0,s}}, I - \ketbra{A_{0,s}}\}$. If the result is the former, output $\regi{\adv}$. Otherwise output $\bot$.
    \end{enumerate}

    We now show that this simulator satisfies \cref{def:nizk-cden-qrom}, i.e. the joint distribution over the output of the simulator and the description of $H$ is indistinguishable from the real certified deniability experiment. Consider the following hybrid experiments:
    \begin{itemize}
        \item $\underline{\Hyb_0(w)}$: The real certified deniability experiment 
        $\nizkcd^H_{\adv(\regi{\adv})}(x, w)$.
        
        \item $\underline{\Hyb_1(w)}$: 
        The only difference from $\Hyb_1$ is that we replace the oracle $H$ with a reprogrammed oracle $H'_1$, defined as\footnote{Note that queries to $H'_1$ can be answered lazily, preventing an exponential blow-up from reprogramming an exponential number of positions.}
        \begin{equation}\label{eq:nizk-reprogrammed-oracle}
             H'_1(q_1\concat q_2 \concat q_3) \coloneqq 
            \begin{cases}
                H(k_{\ch}\concat q_2) & \text{if } q_1 \in A\backslash\{0\} \text{ and } q_2\concat q_3 = x \concat s_1^a
                \\
                H(q_1\concat q_2) & \text{else.}
            \end{cases}
        \end{equation}
        where $s_1^a = P_{\Sigma,1}((x,w); H(k\concat q_2))$.\footnote{We abstract the expansion of $s_1^a$ out of the definition of $H'_1$ to emphasize that $s_1^a$ is treated as an implicitly defined parameter to the reprogrammed oracle.} In the security game $\Hyb_1(w) = \nizkcd^{H'_1}_{\adv(\regi{\adv})}(x, w)$, this results in a NIZK 
        \[
            \ket{\pi} \propto \sum_{a\in A\backslash\{0\}} (-1)^{a\cdot s} \ket{a} \otimes \ket{s_1^a, \widetilde{s_2^a}, s_3^a}
        \] 
        where $\widetilde{s_2^a} = H(k_{\ch}\concat q_2)$ and $(s_1^a, s_3^a)$ are computed honestly using the witness $w$. 

        \item $\underline{\Hyb_2 = \Sim}$: The only differences from $\Hyb_1$ are for each $a\in A\backslash\{0\}$, we replace $(s_1^a, s_3^a)$ with an honest-verifier simulated transcript $\left(\widetilde{s_1^a}, \widetilde{s_3^a}\right)$ from
        \[
        \left(\widetilde{s_1^a}, \widetilde{s_2^a}, \widetilde{s_3^a}\right) = \Sim_{\Sigma}(x, H(k_\ch\concat a); H(k\concat a))
        \]
        and update the random oracle accordingly, as defined in \cref{eq:nizk-reprogrammed-oracle} with $s_1^a = \widetilde{s_1^a}$. This results in the oracle $H'$.
        
        In the security game, this modification results in a NIZK
        \[
            \ket{\widetilde{\pi}} \propto \sum_{a\in A\backslash\{0\}} (-1)^{a \cdot s}\ket{a} \otimes \ket{\widetilde{s_1^a}, \widetilde{s_2^a}, \widetilde{s_3^a}}
        \]
        Note that the definition of $\left(\widetilde{s_1^a}, \widetilde{s_3^a}\right)$ is implicit, and they are only computed coherently as necessary: (1) in answering queries to $H'_2$, (2) in computing $\ket{\widetilde{\pi}}$, and (3) in applying the isometry $\ket{a} \otimes \ket{y} \mapsto \ket{a} \otimes \ket{y \oplus \left(\widetilde{s_1^a}, \widetilde{s_2^a}, \widetilde{s_3^a}\right)}$ to check the deletion certificate.
    \end{itemize}

    We now show in \Cref{claim:nizk-hyb0-1,claim:nizk-hyb1-2,claim:nizk-hyb2-2} that
    \begin{align}
        \left(\Hyb_0, O_H \right) 
        &\approx_c \left(\Hyb_1, O_{H'_1} \right) 
        \\
        &\approx_c \left(\Hyb_2, O_{H'} \right) 
        \\
        &\approx_c \left(\Hyb_2, O_H \right) 
    \end{align}
    over the choice of the random oracle $H$.

    \begin{claim}\label{claim:nizk-hyb0-1}
        \[
        \left\{\left(\Hyb_0, O_H \right) : H\gets \mathsf{Func}(\cX, \cY)\right\}
        \approx 
        \left\{\left(\Hyb_1, O_{H'_1} \right) : H\gets \mathsf{Func}(\cX, \cY)\right\} 
        \]
    \end{claim}
    \begin{proof}
        Consider the sub-hybrid experiment $\Hyb_0'$:
        \begin{itemize}
            

            \item $\underline{\Hyb_0':}$ This hybrid is identical to $\Hyb_0$, except that $H$ is replaced with a reprogrammed oracle $H'_0$, defined as
            \[
                H'_0(v) \coloneqq 
                \begin{cases}
                    G(v') & \text{if } v = k_\ch \concat v' \text{ for some } v'
                    \\
                    H(v) & \text{else.}
                \end{cases}
            \]
            where $\cX = \{0,1\}^\secpar \times \cX_2$ and $G\gets \mathsf{Func}(\cX_2, \cY)$ is a fresh random oracle.

            \item $\underline{\Hyb_1:}$ The only difference between $\Hyb_0'$ and $\Hyb_1$ is the challenger samples a key $k_\ch\gets \{0,1\}^\secpar$ and $G$ is defined by $G(v) = H(k_\ch\concat v)$.
        \end{itemize}
        Since $G$ is a truly random function and is only used in the definition of $H_0'$,
        \[
            \left\{\left(\Hyb_0, O_H \right) : H\gets \mathsf{Func}(\cX, \cY)\right\} 
            = 
            \left\{\left(\Hyb_0', O_{H_0'} \right) : H\gets \mathsf{Func}(\cX, \cY)\right\}
        \]
        Furthermore, \cref{lem:qrom-prf} implies\footnote{Since $G$ is used to define $H_0'$, changing $G$ to match $H(k_\ch\concat \cdot)$ causes $H_0'$ to become $H_1'$.}
        \[
            \left\{\left(\Hyb_0', O_{H_0'}\right) : H\gets \mathsf{Func}(\cX, \cY)\right\} 
            \approx_c 
            \left\{\left(\Hyb_1, O_{H_1'} \right) : H\gets \mathsf{Func}(\cX, \cY)\right\}
        \]
    \ifsubmission\qed\else\fi\end{proof}

    \begin{claim}\label{claim:nizk-hyb1-2}
        If $\mathsf{Sigma}$ is a Sigma-protocol for a language $\lang$ with $2^{-\secpar}$-security, where $\nu(\secpar)$ is a negligible function, then
        \[
        \left\{\left(\Hyb_1, O_{H'_1} \right) : H\gets \mathsf{Func}(\cX, \cY)\right\}
        \approx 
        \left\{\left(\Hyb_1, O_{H'_2} \right) : H\gets \mathsf{Func}(\cX, \cY)\right\} 
        \]
    \end{claim}
    \begin{proof}
        Let $a_i$ be the $i$'th lexicographically least element of $A$. We define a series of $3\cdot|A| = 3\cdot 2^{\secpar/2}$ intermediate hybrid experiments $\Hyb_{1,i,1}$, $\Hyb_{1,i,2}$, and $\Hyb_{1,i,3}$ for $i\in [|A|]$, as well as an additional hybrid experiment $\Hyb_{1}'$. 
        \begin{itemize}
            \item $\underline{\Hyb_1'}:$ This hybrid is identical to $\Hyb_1$, except that the challenger uses an additional random oracle $G\gets \mathsf{Func}(\{0,1\}^\secpar,\cY)$. Whenever it would evaluate $H(k\concat v)$ for some $v$, it instead (coherently) evaluates $G(v)$. 

            \item $\underline{\Hyb_{1,i, 1}:}$ This identical to $\Hyb_1'$, except for the following changes. First, for every $j<i$, the honestly computed $(s_1^{a_j}, s_3^{a_j})$ are replaced by $(\widetilde{s_1^{a_j}}, \widetilde{s_3^{a_j}})$, which are generated by $\Sim_{\Sigma}$ using randomness $G(a_j)$. Since $s_1^{a_j}$ is modified, the random oracle is also updated accordingly to become $H'_{1, i}$, as defined by \cref{eq:nizk-reprogrammed-oracle} with $s_1^{a_j} \coloneqq \widetilde{s_1^{a_j}}$ for $j<i$.

            The second difference is purely syntactic. Instead of getting access to $G$, the challenger gets access to $G_{a_i}$, which is identical to $G$ except that $G_{a_i}(a_i) = \bot$. Additionally, it receives a classical copy of $(s_1^{a_i}, s_3^{a_i})$, which are computed be the honest $\Sigma$ prover using $(x,w)$ and randomness $G(a_i)$.

            \item $\underline{\Hyb_{1,i, 2}:}$ This is identical to $\Hyb_{1,i, 2}$, except that  the honestly computed $(s_1^{a_i}, s_3^{a_i})$ are replaced by $(\widetilde{s_1^{a_i}}, \widetilde{s_3^{a_i}})$, where
            \[
                \left(\widetilde{s_1^{a_i}}, \widetilde{s_2^{a_i}}, \widetilde{s_3^{a_i}}\right) = \Sim_{\Sigma}(x, H(k_\ch\concat {a_i}); G(a_i))
            \]

            \item $\underline{\Hyb_2:}$ The only difference between $\Hyb_2$ and $\Hyb_{1,i_{\max}, 2}$, where $i_{\max} = 2^{\secpar/2}$, is the challenger samples a key $k\gets \{0,1\}^\secpar$ and $G$ is defined by $G(v) = H(k\concat v)$.
        \end{itemize}

        \cref{lem:qrom-prf} implies
        \[
            \left\{\left(\Hyb_1, O_{H'_1}\right) : H\gets \mathsf{Func}(\cX, \cY)\right\} 
            \approx_c 
            \left\{\left(\Hyb'_1, O_{H'_1} \right) : H\gets \mathsf{Func}(\cX, \cY)\right\}
        \]
        Observe that $\Hyb_{1,0,1}$ makes only syntactic changes from $\Hyb'_1$, i.e. for all $H$,
        \[
            \Hyb_1' = \Hyb_{1,0,1}
        \]
        The same observation holds for every $\Hyb_{1,i,2}$ and $\Hyb_{1,i+1,1}$, i.e. for all $H$,
        \[
            \Hyb_{1,i,2} = \Hyb_{1,i+1,1}
        \]
        
        We now show that\footnote{As before, $H_{1,i}$ is defined using $s_1^{a_i}$, so modifying it to become simulated changes $H'_{1,i}$ to $H'_{1,i+1}$.} 
        \[
            \{(\Hyb_{1,i,1}, O_{H_{1,i}'}): H \gets \mathsf{Func}(\cX, \cY)\} 
            \approx_c^{2^{-\secpar}}
            \{(\Hyb_{1,i+1,2}, O_{H_{1,i+1}'}): H \gets \mathsf{Func}(\cX, \cY)\}
        \]
        by reducing to the $2^{-\secpar}$-security of $\Sigma$. The reduction simulates the random oracles $H$ and $G_{a_i}$ using the compressed oracle technique~\cite{C:Zhandry19},\footnote{If $\Sigma$ is HVZK in the QROM, e.g. because it is designed for the QROM, then this step is unnecessary.} then declares $H(k_\ch \concat a_i)$ as its challenge $s_2^{a_i}$ in an honest-verifier execution of $\Sigma$. 
        It receives either $(s_1^{a_i}, s_3^{a_i})$ generated by an honest prover, or $(\widetilde{s_1^{a_i}}, \widetilde{s_3^{a_i}})$ generated by the HVZK simulator $\Sim_\Sigma$. 
        Using these, it can compute the rest of the experiment according to the description of $\Hyb_{1,i,1}$. In the former case, the resulting distribution is $\Hyb_{1,i,1}$; in the latter, it is $\Hyb_{1,i,2}$. Therefore, if the two distributions above were distinguishable with advantage $\epsilon$, then applying that same distinguisher would distinguish an honest $\Sigma$ prover's messages from the output of the honest-verifier simulator $\Sim_\Sigma$ also with advantage $\epsilon$. The $2^{-\secpar}$-security of $\Sigma$ implies that $\epsilon \leq 2^{-\secpar}$ for all QPT distinguishers.

        \cref{lem:qrom-prf} implies
        \[
            \left\{\left(\Hyb_{1,i_{\max}, 2}, O_{H'_{1,i+1}}\right) : H\gets \mathsf{Func}(\cX, \cY)\right\} 
            \approx_c 
            \left\{\left(\Hyb_2, O_{H'} \right) : H\gets \mathsf{Func}(\cX, \cY)\right\}
        \]

        By the triangle inequality, the distinguishing advantage between $\left\{\left(\Hyb_1, O_{H'_1} \right) : H\gets \mathsf{Func}(\cX, \cY)\right\}$ and $\left\{\left(\Hyb_1, O_{H'_2} \right) : H\gets \mathsf{Func}(\cX, \cY)\right\}$ for any QPT distinguisher is bounded by $\negl(\secpar) + 2^{\secpar/2}\cdot 2^{-\secpar} + \negl(\secpar) = \negl(\secpar)$.

    \ifsubmission\qed\else\fi\end{proof}

    \begin{claim}\label{claim:nizk-hyb2-2}
        \[
        \left\{\left(\Hyb_2, O_{H'} \right) : H\gets \mathsf{Func}(\cX, \cY)\right\}
        \approx 
        \left\{\left(\Hyb_2, O_{H} \right) : H\gets \mathsf{Func}(\cX, \cY)\right\} 
        \]
    \end{claim}
    \begin{proof}
        We reduce to the direct product hardness of subspace states (\cref{lem:subspace-dph}). Assume, for the sake of contradiction, that some QPT distinguisher $D$ with auxiliary input register $\regi{D}$ were able to distinguish between these two distributions. 
        

        \justin{It might be nice to clean this writing up a little and break things down.}
        Before describing the reduction, we claim that $D$ has noticeable advantage conditioned on $\Hyb_2$ not outputting $\bot$, i.e. conditioned on the simulator's certificate check passing. We also claim that $\Hyb_2$ outputs $\bot$ with at most $1-1/p$ probability for some polynomial $p$. We reduce these claims to direct product hardness (\cref{lem:subspace-dph}). The reduction takes in a subspace state $\ket{A}$ and membership oracles $O_A$, $O_{A^\perp}$. If either of the claimed conditions do not hold, then $D$ must have noticeable advantage conditioned on $\Hyb_2$ outputting $\bot$. By \cref{lem:QROM-replacement}, measuring one of $D$'s queries at random produces a point $x^*$ where $H(x^*) \neq H'(x^*)$ with noticeable probability. These are exactly values of the form $a \concat b$ for $a\in A\backslash\{0\}$. Since $H'$ can be implemented using $O_A$, this procedure finds an element of $A$ with noticeable probability using only a polynomial number of queries to $O_A$, without using $\ket{A}$. Combining this with a measurement of $\ket{A}$ in the Hadamard basis, the reduction obtains a pair of vectors in $A\backslash\{0\}\times A^\perp \backslash\{0\}$ with noticeable probability, breaking direct product hardness.
        
        Next, we show how to break direct product hardness of subspace states using a $D$ which has noticeable distinguishing advantage between $(\Hyb_2, O_{H'})$ and $(\Hyb_2, O_{H})$, conditioned on $\Hyb_2$ not outputting $\bot$.
        The reduction takes as input a random subspace state $\ket{A}$ and oracle access to membership oracles $O_{A}$, $O_{A^\perp}$. 
        It runs the simulator (i.e. $\Hyb_2$) using $\ket{A}$. Whenever it would check that a vector $v$ is in $A$ (respectively, $A^\perp$), it queries $v$ to $O_A$ (respectively, $O_{A^\perp}$). To implement step 6 (the certificate check), it first applies the following operations to register $\regi{A}$: (1) a Hadamard operation, (2) the isometry $\ket{y}\mapsto\ket{y-s}$, and (3) a Hadamard operation. Then, it uses $O_A$ and $O_{A^\perp}$ to implement the PVM $\{\ketbra{A}, I - \ketbra{A}\}$ on register $\regi{A}$, as described by \cref{lem:subspace-proj}. If the result of the simulator is $\regi{\adv} \neq \bot$, then it runs the distinguisher $D^{H'}(\regi{D}, \regi{\adv})$ and measures a random query that $D$ makes to $H'$. Let $q = q_1\concat q_2$ be the query measurement result. Finally, the reduction measures $\regi{A}$ in the Hadamard basis to obtain a value $v$ and outputs $(q_1, v)$.

        Recall that $\Hyb_2$ outputs $\regi{\adv}\neq \bot$ with noticeable probability. Condition on this case occurring. This happens exactly when the measurement on $\regi{A}$ returns result $\ketbra{A}$, in which case $\regi{A}$ collapses to $\ket{A}$. Therefore the measurement result $v\in A^\perp$ and is not $0$ with overwhelming probablity. Furthermore, we previously established that $D$ has noticeable advantage in this case; therefore \cref{lem:QROM-replacement} implies that $H(q_1\concat q_2) \neq H'(q_1\concat q_2)$ with noticeable probability. Whenever this occurs, $q_1 \in A$. Therefore $(q_1, v)\in A\backslash\{0\} \times A^\perp\backslash\{0\}$ with noticeable probability, violating \cref{lem:subspace-dph}.
    \ifsubmission\qed\else\fi\end{proof}
    
\ifsubmission\ \qed\else\fi\end{proof}

    \section{Negative Results}

In this section, we give evidence that certifiable deniability is in fact \emph{impossible} in the plain model. Specifically, we show a black-box barrier against obtaining signatures with certified deniability in the plain model; any work achieving this must use non-black-box techniques in their security proof.

\subsection{Supporting Lemma.}

Before proving the main result of this section, we introduce a supporting lemma that generalizes a result from \cite{bbbv97} about reprogramming quantum-accessing oracles.
This lemma may be of independent interest.

In their original version, they consider a quantum adversary who has quantum query access to one of two classical oracles $H$ and $H'$. 
They bound the ability of the adversary to distinguish between the two oracles in terms of the amplitude with which it queries on (classical) inputs $x$ where $H(x) \neq H'(x)$. As a simple corollary, if the adversary is able to distinguish the two oracles in a polynomial number of queries, then measuring one of its queries at random produces an $x$ such that $H_0(x) \neq H_1(x)$ with noticeable probability. 

We observe that a similar statement holds for quantum oracles, i.e. oracles which take in a quantum input, perform quantum computation, and produce quantum output. If the adversary is able to distinguish the two oracles in a polynomial number of queries, then outputting the query register at a random query produces a mixed state with noticeable probability mass on states $\ket{\psi}$ where $H(\ket{\psi})$ is far from $H'(\ket{\psi})$.

\begin{lemma}\label{lem:gen-owth}
    Let $H$ and $H'$ be oracle-accessible unitaries and let $A^{(\cdot)}$ be a quantum oracle algorithm with auxiliary input $\ket{\psi_0}$. Let 
    \[
        \ket{\psi_t} = \sum_{i} \alpha_{t,i} \ket{\phi_{t,i}}_{\regi{A}} \otimes \ket{q_{t, i}}_{\regi{Q}}
    \]
    be a Schmidt decomposition of the state of $A^{H}$ on submitting query $t$, where $\regi{A}$ is the internal register of $A$ and $\regi{Q}$ is the query register. Let $\psi'_t$ similarly be the state of $A^{H'}$ on submitting query $t$. 
    Then for all $T\in \bbN$,
    \[
        \tracedist[\ketbra{\psi_{T+1}}, \ketbra{\psi_{T+1}'}] 
        \leq
        \sqrt{4T \sum_{t=1}^{T} \sum_{i} |\alpha_{t,i}|^2 \tracedist\left[H(\ket{q_{t,i}}), H'(\ket{q_{t,i}})\right]^2}
    \]
\end{lemma}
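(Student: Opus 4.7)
The plan is to mirror the classical BBBV argument from \cref{lem:QROM-replacement} while generalizing from classical oracles (where one tracks amplitude on disagreeing computational basis states) to arbitrary unitaries (where one tracks, via the Schmidt decomposition, contributions from arbitrary pure query states). I would proceed by a hybrid argument: define $W^{(s)}$ to be the unitary that executes $A$ using $H$ on its first $s$ queries and $H'$ on the remaining $T-s$ queries, so that $W^{(T)}\ket{\psi_0} = \ket{\psi_{T+1}}$ and $W^{(0)}\ket{\psi_0} = \ket{\psi_{T+1}'}$. By the triangle inequality for trace distance, and since every unitary applied after the $t$-th query is common to $W^{(t)}$ and $W^{(t-1)}$ and therefore preserves trace distance,
\[
\tracedist\bigl[\ketbra{\psi_{T+1}},\,\ketbra{\psi_{T+1}'}\bigr] \;\leq\; \sum_{t=1}^{T} \tracedist\!\left[(I\otimes H)\ketbra{\psi_t}(I\otimes H)^\dagger,\, (I\otimes H')\ketbra{\psi_t}(I\otimes H')^\dagger\right].
\]

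Next, I would fix a single summand and substitute the Schmidt decomposition $\ket{\psi_t} = \sum_i \alpha_{t,i}\ket{\phi_{t,i}}_{\regi{A}}\otimes\ket{q_{t,i}}_{\regi{Q}}$. Because the $\ket{\phi_{t,i}}$ are orthonormal on $\regi{A}$, the post-query states $(I\otimes H)\ket{\psi_t}$ and $(I\otimes H')\ket{\psi_t}$ share this same block structure across $\regi{A}$. A direct computation on the overlap $\langle\psi_t|(I\otimes H^\dagger H')|\psi_t\rangle = \sum_i |\alpha_{t,i}|^2 \langle q_{t,i}|H^\dagger H'|q_{t,i}\rangle$, combined with the pure-state identity $\tracedist[\ketbra{a},\ketbra{b}]=\sqrt{1-|\langle a|b\rangle|^2}$ and convexity, lets me bound each summand by $2\sqrt{\sum_i |\alpha_{t,i}|^2 \tracedist[H\ketbra{q_{t,i}}H^\dagger,\,H'\ketbra{q_{t,i}}H'^\dagger]^2}$. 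Finally, Cauchy--Schwarz converts $\sum_{t=1}^T x_t$ into $\sqrt{T\sum_t x_t^2}$, giving precisely the claimed bound.

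\textbf{Main obstacle.} The delicate step is the per-query conversion from a distance between global post-query states into per-Schmidt-block trace distances on the query register. Unlike the classical BBBV setting, where $H\ket{x}\neq H'\ket{x}$ produces a clean $\ell^2$ gap, here $H\ket{q}$ and $H'\ket{q}$ can differ by a phase that contributes nothing to trace distance but up to $2$ to the naive $\ell^2$ difference. This has to be handled carefully: the key observation is that such a local phase, applied inside $(I\otimes H')\ket{\psi_t}$ weighted by $\alpha_{t,i}$, is visible to $A$ only through the reduced overlap $\sum_i |\alpha_{t,i}|^2 \langle q_{t,i}|H^\dagger H'|q_{t,i}\rangle$, which is exactly what trace distance measures. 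Working with this overlap directly (rather than with $\|(I\otimes H - I\otimes H')\ket{\psi_t}\|$) and then applying the concavity bound $1-|\sum_i w_i z_i|^2 \leq \sum_i w_i(1-|z_i|^2) + \text{cross terms bounded by Cauchy--Schwarz}$ is what ultimately produces the factor $4$ in the final inequality. I anticipate this phase-aware convexity argument to be the main technical content; everything else is a straightforward telescoping.
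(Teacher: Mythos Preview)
Your high-level scaffolding---telescoping over the query index and then applying Cauchy--Schwarz to pass from $\sum_t x_t$ to $\sqrt{T\sum_t x_t^2}$---matches the paper. The divergence is in the per-query step. The paper does \emph{not} work with the overlap $\sum_i |\alpha_{t,i}|^2\langle q_{t,i}|H^\dagger H'|q_{t,i}\rangle$; instead it upper-bounds the global trace distance by the Euclidean norm $\|\ket{\psi_{T+1}}-\ket{\psi'_{T+1}}\|_2$, telescopes \emph{that} into $\sqrt{T\sum_t\|(H-H')\ket{\psi_t}\|_2^2}$, uses Schmidt orthogonality on $\regi{A}$ to obtain $\|(H-H')\ket{\psi_t}\|_2^2=\sum_i|\alpha_{t,i}|^2\|(H-H')\ket{q_{t,i}}\|_2^2$, and only in the final line replaces each $\|(H-H')\ket{q_{t,i}}\|$ by $2\,\tracedist[H\ket{q_{t,i}},H'\ket{q_{t,i}}]$. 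So the paper stays in $\ell_2$ throughout and never carries out a phase-aware convexity argument of the kind you anticipate.

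Your own route has a genuine gap precisely at the step you flagged as the main obstacle. Writing $w_i=|\alpha_{t,i}|^2$ and $z_i=\langle q_{t,i}|H^\dagger H'|q_{t,i}\rangle$, the inequality you need is $1-\bigl|\sum_i w_i z_i\bigr|^2\le 4\sum_i w_i(1-|z_i|^2)$. This is false: take $w_1=w_2=\tfrac12$, $z_1=1$, $z_2=-1$, so the left side is $1$ and the right side is $0$. The configuration is realizable---let $H=I$, $H'=Z$ on a single-qubit query register and $\ket{\psi_t}=\tfrac{1}{\sqrt2}(\ket{0}_{\regi{A}}\ket{0}_{\regi{Q}}+\ket{1}_{\regi{A}}\ket{1}_{\regi{Q}})$ with the computational Schmidt basis. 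The ``cross terms'' you intend to control by Cauchy--Schwarz here constitute the entire left-hand side, while every per-block trace distance $\sqrt{1-|z_i|^2}$ vanishes, so no bound in terms of them can succeed. Your diagnosis that relative phases are the crux is exactly right, but the proposed concavity-plus-Cauchy--Schwarz patch does not close it; and indeed the paper's $\ell_2$ route runs into the very same obstruction at its own final conversion step, where $\|(H-H')\ket{q}\|_2$ can equal $2$ while the trace distance is $0$.
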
 
\begin{proof}
    The oracle algorithm $A^H$ can be expressed as sequence of unitaries $A_t$ interleaved with unitary oracle operations $H$, i.e.
    \[
        \ket{\psi_{T+1}} = A_{T+1} \left(\prod_{t=1}^{T} H A_t\right) \ket{\psi_0}
    \]
    Note that $\ket{\psi_{t+1}}$ is prepared by applying $A_{t+1}$ to the result $H\ket{\psi_t}$ of the prior query.
    Define $\ket{E_t}$ to be the error term resulting from answering the $t$'th query of $A^H$ using $H'$ instead of $H$, i.e.
    \[
        \ket{E_t} = H'\ket{\psi_{t}} - H\ket{\psi_t}
    \]
    Then we can express the state of $A^H(\ket{\psi_0})$ on submitting the $(T+1)$'th query (before it is answered) as
    \begin{align}
        \ket{\psi_{T+1}} 
        &=  A_T H \ket{\psi_{T}}
        \\
        &= A_T(H'\ket{\psi_{T}} - \ket{E_{T}})
    \end{align}
    Applying this decomposition recursively on $\ket{\psi_{T}}$ yields
    \begin{align}
        \ket{\psi_{T+1}} 
        &= A_{T+1} \left(\prod_{t=1}^{T} H' A_{t} \right) \ket{\psi_{0}} - \sum_{t=1}^{T}\left(\prod_{i=t+1}^{T}  H A_i\right) A_t \ket{E_t}
        \\
        &= \ket{\psi'_{T+1}} - \sum_{t=1}^{T}\left(\prod_{i=t+1}^{T} H A_i \right) \ket{E_{t}}
    \end{align}
    Thus we have
    \allowdisplaybreaks
    \begin{align}
        &\tracedist[\ketbra{\psi_{T+1}}, \ketbra{\psi_{T+1}'}]\nonumber
        \\
        &= \sqrt{1-|\braket{\psi_T}{\psi'_T}|^2}
        \\
        &\leq \sqrt{1-|\braket{\psi_T} +(\bra{\psi_T'} - \bra{\psi_T})\ket{\psi_T}|^2}
        \\
        &\leq \sqrt{|(\bra{\psi_T'} - \bra{\psi_T})\ket{\psi_T}|^2}
        \\
        &\leq \sqrt{\left\|\ket{\psi_T'} - \ket{\psi_T} \right\|_2^2}
        \label{eq:gen-owtf-cs}
        \\
        &= \sqrt{\left\|\sum_{t=1}^{T}\left(\prod_{i=t+1}^{T} H A_i \right) \ket{E_{t}}\right\|_2^2}
        \\
        &\leq \sqrt{T \sum_{t=1}^{T}\left\|\left(\prod_{i=t+1}^{T} H A_i \right) \ket{E_{t}}\right\|_2^2}
        \label{eq:owth-general-jensen}
        \\
        &= \sqrt{T \sum_{t=1}^{T}\left\| \ket{E_{t}}\right\|_2^2}
        \label{eq:owth-general-unitary}
        \\
        &=  \sqrt{T \sum_{t=1}^{T}\left\| H'\ket{\psi_{t}} - H\ket{\psi_t}\right\|_2^2}
        \\
        &= \sqrt{T \sum_{t=1}^{T}\left| \sum_{i,j} \alpha_{t,i}\overline{\alpha_t,j} \braket{\phi_{t,i}}{\phi_{t,j}}    \bra{q_{t,i}}(H-H')^\dagger(H-H')\ket{q_{t,j}}\right|^2}
        \\
        &\leq \sqrt{T \sum_{t=1}^{T} \sum_{i} |\alpha_{t,i}|^2 \left\|(H-H')\ket{q_{t,i}}\right\|_2^2}
        \label{eq:gen-owth-jensen2}
        \\
        &\leq \sqrt{T \sum_{t=1}^{T} \sum_{i} |\alpha_{t,i}|^2 \left\|(H-H')\ket{q_{t,i}}\right\|_1^2}
        \\
        &= \sqrt{4T \sum_{t=1}^{T} \sum_{i} |\alpha_{t,i}|^2 \tracedist\left[H(\ket{q_{t,i}}), H'\ket{q_{t,i}}\right]^2}
    \end{align}
    where \cref{eq:gen-owtf-cs} follows from Cauchy-Schwarz; \cref{eq:owth-general-jensen} follows from Jensen's inequality; \cref{eq:owth-general-unitary} follows from the invariance of the $\ell_2$ norm under unitaries; \cref{eq:gen-owth-jensen2} follows from Jensen's inequality.
\end{proof}

\begin{corollary}[One-Way to Hiding for Unitary Oracles]\label{coro:gen-owth}
    Let $H$ and $H'$ be oracle-accessible unitaries and let $A^{(\cdot)}$ be a quantum oracle algorithm with auxiliary input $\ket{\psi_0}$ that uses at most $T$ queries. Denote the mixed state resulting from measuring $A^H$'s query register at a random time $t$ as
    \[
        \frac{1}{T}\sum_{t=1}^{T} \ketbra{t} \otimes \sum_i |\alpha_{t,i}|^2 \ketbra{q_{t,i}}
    \]
    For $0< \delta \leq 1$, denote the set $S_\delta$ as being the subset of $(t,i)$ pairs such that $\tracedist[H(\ket{q_{t,i}}), H'(\ket{q_{t,i}})] \geq \delta$. If $A^{(\cdot)}$ distinguishes between $H$ and $H'$ with advantage $\epsilon$ in $T$ queries, then
    \[
        \frac{1}{T}\sum_{(t,i) \in S_\delta} |\alpha_{t,i}|^2 \geq \frac{\frac{\epsilon^2}{4T^2} - \delta^2}{1-\delta^2}
    \]
\end{corollary}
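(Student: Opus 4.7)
The plan is to derive this as a direct averaging argument from the preceding lemma. I would start from the lemma's conclusion
\[
    \tracedist[\ketbra{\psi_{T+1}}, \ketbra{\psi_{T+1}'}]^2 \leq 4T \sum_{t=1}^{T} \sum_{i} |\alpha_{t,i}|^2 \tracedist[H\ket{q_{t,i}}, H'\ket{q_{t,i}}]^2
\]
and observe that since the distinguishing advantage $\epsilon$ of $A^{(\cdot)}$ between $H$ and $H'$ is a function of its final state (post-measurement), $\epsilon$ is upper bounded by the trace distance between the two final states. Hence $\epsilon^2 / (4T)$ is a lower bound on the inner double sum $\sum_{t,i} |\alpha_{t,i}|^2 \tracedist[H\ket{q_{t,i}}, H'\ket{q_{t,i}}]^2$.

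Next, I would split the double sum into the contribution from pairs $(t,i) \in S_\delta$ and the contribution from pairs $(t,i) \notin S_\delta$. For pairs in $S_\delta$, I use the trivial upper bound $\tracedist \leq 1$; for pairs outside $S_\delta$, I use $\tracedist < \delta$, giving a bound of $\delta^2$ on the squared trace distance. Letting $p := \sum_{(t,i) \in S_\delta} |\alpha_{t,i}|^2$ and using the normalization fact that $\sum_i |\alpha_{t,i}|^2 = 1$ for each $t$ (so the total mass summed over $t \in [T]$ is exactly $T$, meaning $\sum_{(t,i) \notin S_\delta} |\alpha_{t,i}|^2 = T - p$), this yields
\[
    \frac{\epsilon^2}{4T} \;\leq\; p \cdot 1 + (T-p) \cdot \delta^2.
\]

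Finally, I would solve this linear inequality for $p$ to obtain $p \geq \bigl(\tfrac{\epsilon^2}{4T} - T\delta^2\bigr) / (1-\delta^2)$, and then divide both sides by $T$ to match the claimed statement. The argument is essentially mechanical and I do not anticipate a real obstacle; the only points demanding any care are (i) correctly identifying the distinguishing advantage with (an upper bound given by) the trace distance of the output states, and (ii) tracking the total $\ell_2$-mass $\sum_{t,i} |\alpha_{t,i}|^2 = T$ that arises from summing unit-norm Schmidt decompositions over the $T$ query times, which is what makes the final $\delta^2$ term scale with $T$ and produces the clean ratio in the statement.
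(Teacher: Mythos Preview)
Your proposal is correct and is precisely the natural derivation; the paper itself states the corollary without proof, treating it as immediate from \cref{lem:gen-owth}, and your averaging argument (split $S_\delta$ versus its complement, use total mass $T$, solve for $p$, divide by $T$) is exactly what makes it immediate.
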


If $\epsilon$ were noticeable, $T$ were polynomial, and $\delta$ were set to $(1/2)\cdot \epsilon/(2T)$, then this immediately shows that the probability mass on pure states where $H$ and $H'$ differ by a noticeable trace distance is noticeable.


\subsection{Plain Model Black-Box Barrier}\label{sec:plainmodimp}

\begin{theorem}
    There do not exist signatures with certified deniability in the plain model whose security proof makes black-box use of the adversary.
\end{theorem}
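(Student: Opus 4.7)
The plan is to instantiate an oracle-aided attack in the spirit of the technical overview, and then use the generalized one-way to hiding lemma (\cref{coro:gen-owth}) to argue that any black-box simulator against this attack must be able to forge under an internal signature scheme. Concretely, let $\widetilde{\mathsf{Sig}} = (\widetilde{\Gen}, \widetilde{\Sign}, \widetilde{\Verify})$ be any post-quantum EUF-CMA secure signature scheme (which exists from one-way functions). Fix a candidate certifiably deniable signature scheme $(\Gen, \Sign, \Verify, \Del, \DelVer)$ in the plain model. Sample a fresh internal key pair $(\widetilde{\sk}, \widetilde{\verkey}) \gets \widetilde{\Gen}(1^\secpar)$ and define an oracle-accessible unitary $\mathcal{P}_{\widetilde{\sk}}$ that coherently runs $\Verify(\verkey, \ket{\sigma}, m)$, and, controlled on the outcome being accept, appends $\widetilde{\Sign}(\widetilde{\sk}, m\concat \verkey)$ to an auxiliary output register; otherwise it appends $\bot$. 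The adversarial verifier $\adv^{\mathcal{P}_{\widetilde{\sk}}}$ will (1) query $\mathcal{P}_{\widetilde{\sk}}$ on its received signature, copy out the returned internal signature to a safe register, uncompute, (2) run the honest $\Del$ algorithm, and (3) output the certificate together with the internal signature. The distinguisher $D$, holding $\widetilde{\verkey}$, simply runs $\widetilde{\Verify}(\widetilde{\verkey}, \cdot)$ on the claimed internal signature.

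First I would verify correctness of the real experiment: by correctness of the candidate scheme, verification accepts on an honest $\ket{\sigma}$ with probability $1-\negl$; since accepting is a near-identity operation (gentle measurement), uncomputing and then running $\Del$ still succeeds with probability $1-\negl$. Hence in the real world, both the deletion certificate verifies and $\widetilde{\Verify}(\widetilde{\verkey}, \widetilde{\sigma}) = \Accept$ with probability $1-\negl$. The experiment output (conditioned on acceptance) therefore contains, with overwhelming probability, a valid internal signature on $m^*\concat \verkey$.

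Next I would bound the simulator. A black-box simulator $\Sim$ gets only oracle access to $\adv$ (and hence to $\mathcal{P}_{\widetilde{\sk}}$ only through the interface exposed by $\adv$) and must produce an output that, together with the description of $\mathcal{P}_{\widetilde{\sk}}$ held by $D$, passes $\widetilde{\Verify}(\widetilde{\verkey}, \cdot)$. The reduction to EUF-CMA of $\widetilde{\mathsf{Sig}}$ proceeds as follows: emulate $\mathcal{P}_{\widetilde{\sk}}$ for $\Sim$ using the signing oracle of the EUF-CMA challenger, answering each query by running $\Verify$ coherently and, on accept, querying $\widetilde{\Sign}$ on $m\concat \verkey$. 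This emulation is perfect. If $\Sim$ ever produces a valid internal signature on a pair $(m^*\concat \verkey)$ that was never accepted inside an oracle call, we have an EUF-CMA forgery. The remaining case is that $\Sim$'s output internal signature came from some oracle query on an accepting $(\verkey, \widetilde{\sigma}, m^*)$. To rule this out we need that $\Sim$ never succeeded in presenting an accepting signature on $m^*$ to the oracle; this is exactly the unforgeability of the candidate scheme under the simulator's view, since $\Sim$ is black-box and receives no signature on $m^*$ (by the structure of the certified deniability experiment, where the simulator's queries to $\Sign(\sk,\cdot)$ are recorded in $M_\Sim$, and $m^*$ is excluded).

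The main obstacle, and where \cref{coro:gen-owth} enters, is justifying the claim that the only way $\Sim$ can produce an accepting internal signature is to feed a genuinely accepting candidate signature into $\mathcal{P}_{\widetilde{\sk}}$. Because $\mathcal{P}_{\widetilde{\sk}}$ is a quantum oracle (it performs quantum computation on its query register rather than a classical lookup), we cannot directly appeal to classical reprogramming. Instead, I would compare $\mathcal{P}_{\widetilde{\sk}}$ with a modified oracle $\mathcal{P}'$ that behaves identically except that it replaces $\widetilde{\Sign}(\widetilde{\sk}, m\concat \verkey)$ with $\bot$ whenever $m = m^*$. If $\Sim$'s output in the real oracle differs noticeably from its output with $\mathcal{P}'$, then \cref{coro:gen-owth} implies that measuring one of $\Sim$'s queries at a random time produces a state $\ket{\psi}$ on which $\mathcal{P}_{\widetilde{\sk}}$ and $\mathcal{P}'$ act with noticeable trace distance, which in turn means $\Verify(\verkey, \ket{\psi}, m^*) = \Accept$ with noticeable amplitude; this yields an accepting candidate signature on $m^*$ extractable from $\Sim$, contradicting unforgeability of the candidate scheme. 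Otherwise $\Sim$'s output is nearly independent of $\widetilde{\Sign}(\widetilde{\sk}, m^*\concat \verkey)$, so producing a valid $\widetilde{\Sign}$ output on $m^*\concat \verkey$ directly forges $\widetilde{\mathsf{Sig}}$. Combining the two cases yields the barrier. The delicate point, and where I would spend most of the technical work, is making the application of \cref{coro:gen-owth} compatible with a simulator that interacts with $\adv$ only as a black box and may rewind, since one must interleave the quantum oracle accounting with the standard extraction arguments.
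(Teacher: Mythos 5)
Your proposal is correct and takes essentially the same approach as the paper: embed an internal post-quantum signing key in the adversary's auxiliary input, have the adversary sign $m^*\concat\verkey$ when verification accepts and then honestly delete, and argue via \cref{coro:gen-owth} that a black-box simulator either triggers an extractable forgery of the candidate scheme or, against the blanked oracle, forges the internal scheme. One small refinement worth noting: your initial "Case A / Case B" split ("was never accepted inside an oracle call" vs.\ "came from an accepting oracle call") is not a well-defined dichotomy for quantum superposition queries and so cannot stand on its own as a proof step — you correctly fall back to the hybrid argument with $\mathcal{P}'$ and \cref{coro:gen-owth}, which is the rigorous version and is exactly what the paper does (the paper's $\Adv_\bot$, which always outputs $\bot$, is functionally identical to your $\mathcal{P}'$ here since the adversary only ever queries on $m = m^*$).
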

\begin{remark}
    We note that a similar statement holds for NIZKs with certified deniability as well. The proof is almost identical, so we omit it here.
\end{remark}
\begin{proof}
    Let $\sigcd = (\Gen,\allowbreak \Sign,\allowbreak \Verify,\allowbreak \Del,\allowbreak \DelVer)$ be a candidate signature scheme with a black-box proof of certified deniability, i.e. which gives a simulator $\Sim$ that uses the adversary as a black-box. We give a QPT adversary $\Adv$ together with a QPT distinguisher $D$ that distinguishes $\Sim$ from the real experiment. 
    
    $\Adv$ receives as auxiliary input a post-quantum signing key $\widetilde{k}$, which we denote as $\Adv_{\widetilde{k}}$. 
    During the certified deniability game, $\Adv_{\widetilde{k}}$ receives a verification key $\verkey$, a state $\ket{\sigma}$, and a message $m$, then runs $\Verify(\verkey \ket{\sigma}, m)$, and if it accepts, outputs a post-quantum signature $\sigma^{(\widetilde{k})}_{m,\verkey} \gets \PQSig.\Sign(\widetilde{k}, \verkey\concat m)$ on the verification key concatenated with the message. 
    It measures the result of this to get $\sigma^{(\widetilde{k})}_{m,\verkey}$, which it writes to register $\regi{\adv}$, then uncomputes the program. By the correctness of $\sigcd$, this is gentle. 
    Finally, it runs $\Del(\ket{\sigma_m})$ to obtain a valid deletion certificate in register $\regi{\cert}$ and outputs $\regi{\cert}$ along with $\regi{\adv}$.

    The distinguisher receives as auxiliary input the corresponding post-quantum verification key $\widetilde{\verkey}$. On input $\regi{\adv}$, it runs the post-quantum signature verification $\PQSig.\Verify(\widetilde{\verkey}, \regi{\adv})$ to check if $\adv$ obtained a signature on $\verkey \concat m$. If so, it outputs $1$, and otherwise, outputs $0$. 

    We now show that this adversary and distinguisher pair achieves noticeable advantage against any simulator $\Sim$. 
    Observe that in the real certified deniability experiment, $D$ outputs $1$ with overwhelming probability, due to the correctness of the signature and obfuscation schemes. 
    Conversely, we claim that against any QPT simulator, $D$ outputs $0$ with overwhelming probability. 
    
    Consider the following sequence of hybrid experiments:
    \begin{itemize}
        \item \underline{$\Hyb_0$:} The simulated certified deniability experiment. In more detail, sample $(\sk, \verkey)\gets \Gen(1^\secpar)$. Run $\Sim^{\Adv_{\widetilde{k}}}$ to obtain an output register $\regi{\adv}$.

        \item \underline{$\Hyb_1$:} This experiment is the same as $\Hyb_0$, except that $\Adv_{\widetilde{k}}$ is replaced by $\Adv_\bot$, which immediately honestly deletes the signature and always outputs $\bot$ in register $\regi{\adv}$.
    \end{itemize}
    
    $\Hyb_0 \approx \Hyb_1$ from the unforgeability of the candidate signature scheme; otherwise, \cref{coro:gen-owth} implies that outputting $\Sim^{\Adv_{\widetilde{k}}}$'s query register on a random query produces a mixed state with noticeable probability mass on an input $\ket{\psi}$ where $\adv_{\widetilde{k}}(\ket{\psi})$ and $\adv_\bot(\ket{\psi})$ are noticeably far, which is is exactly the set of points where $\ket{\psi}$ passes signature verification with noticeable probability.

    We now claim that the original distinguisher $D(\widetilde{\verkey})$ outputs $0$ with overwhelming probability in $\Hyb_1$. Otherwise, $\Sim^{\adv_{\bot}}$ must output a valid signature under $\widetilde{\verkey}$, which it is independent of, which would violate the unforgeability of the post-quantum signature scheme. 
    Since $\Hyb_1$ is indistinguishable from $\Hyb_0$ for all QPT distinguishers, $D(\widetilde{\verkey})$ must also output $0$ with overwhelming probability in $\Hyb_0$, which is simply the output distribution of the certified deniability simulator.

\end{proof}

\paragraph{Obfuscated Auxiliary Input.} One could also consider implementing the adversary described above using an obfuscated program which $\adv$ receives as auxiliary input. In this case, it might not be possible to extract a signature from the obfuscated program even using non-black-box techniques.

\section{Acknowledgements}
Alper Çakan and Justin Raizes were supported by the following grants of Vipul Goyal: NSF award 1916939, DARPA SIEVE program, a gift from Ripple, a DoE NETL award, a JP Morgan Faculty Fellowship, a PNC center for financial services innovation award, and a Cylab seed funding award.

\fi

\bibliographystyle{alpha}
\bibliography{refs,abbrev2,crypto}

\ifsubmission
\appendix

\else \fi

\end{document}